\numberwithin{equation}{section}
\theoremstyle{plain}
\newtheorem{theorem}{Theorem}
\newtheorem{definition}[theorem]{Definition}
\newtheorem{lemma}[theorem]{Lemma}
\newtheorem{proposition}[theorem]{Proposition}
\newtheorem{remark}[theorem]{Remark}
\begin{document}

\begin{center}
  \Large \bf Asymptotic Analysis for Optimal Dividends in a Dual Risk Model\end{center}

\author{}
\begin{center}
{Arash Fahim}\,\footnote{Department of Mathematics, Florida State University, 1017 Academic Way, Tallahassee, FL-32306, 
United States of America;},
  Lingjiong Zhu\,\footnote{Corresponding Author. Department of Mathematics, Florida State University, 1017 Academic Way, Tallahassee, FL-32306, United States of America; Email: zhu@math.fsu.edu; Tel.: 1-850-644-3800; Fax: 1-850-644-4053.
  }
\end{center}

\begin{center}
 \today
\end{center}

\begin{abstract}
The dual risk model is a popular model in finance and insurance, which
is often used to model the wealth process of a venture capital or high tech company.
Optimal dividends have been extensively studied in the literature for a dual risk model.
It is well known that the value function of this optimal control problem does
not yield closed-form solutions except in some special cases. In this paper, 
we study the asymptotics of the optimal dividend problem when the parameters of the model
go to either zero or infinity. Our results provide  insights to the optimal strategies
and the optimal values when the parameters are extreme. 
\end{abstract}

\section{Introduction}

In a classic risk model in the insurance literature, the surplus process increases
continuously in time
with the constant premium rate
and decreases due to the claims that follow a compound Poisson process.
In a dual risk model (see, e.g., \citet{Avanzi}), the opposite happens, that is, the surplus process
decreases continuously in time with the constant rate,
and increases according to a compound Poisson process.
A dual risk model can be used to model wealth of a venture capital,
where the running cost is deterministic whereas the revenues are stochastic, 
see, e.g., \cite{Afonso, Avanzi, AvanziII, BE,CheungI,CheungII,FZ,Ng,NgII, RCE,YS} etc.

In the pioneering work by  \citet{Avanzi}, they studied
the optimal dividend problem in a dual risk model, 
that is, the optimal dividend strategy
for maximizing the expected payments of all the future dividends
to the shareholders until the time of the ruin.
They proved that the optimal strategy is a barrier strategy, that is, 
there exists an optimal barrier below which no dividend is paid out
and when the wealth process jumps above the barrier,
all the surplus is paid out as the dividends to the shareholders immediately
and the surplus drops to the level of the barrier.

There have been many related works of \citet{Avanzi}. 
In \citet{AvanziII}, they studied a dividend barrier strategy for a dual risk model
whereby dividend decisions are made only periodically, but ruin is still allowed to occur in continuous time.
\citet{Ng} studied a dual model with a threshold dividend strategy, with exponential interclaim times.
In another related work,  \citet{Afonso} studied the connections between dual 
and classical risk models and used that to compute various quantities of interest. 
 \citet{CheungII} considered
the dividend moments in a dual risk model. They derived 
integro-differential equations for the moments of the total discounted dividends which can be solved explicitly 
assuming the jump size distribution has a rational Laplace transform.
The expected discounted dividends assuming
the profits follow a Phase Type distribution were studied in \citet{RCE} . 
The Laplace transform of the ruin
time, expected discounted dividends for the Sparre-Andersen dual model
were derived in \citet{YS}. 
In \citet{DKZ}, the finite time distribution of the ruin time was 
studied for the case when the interarrival times are not independent. 
\citet{BKY1} showed that for all spectrally positive L\'evy processes, the optimal strategy is of barrier type and they provided a closed-from solution for the optimal dividend value function in terms of the inverse Laplace transform of scale functions of the L\'evy process. In a separate paper \cite{BKY2}, they extended their analysis to the case where payment of dividend carries a fixed transaction cost by using double-barrier dividend strategies, i.e., when the dual process hits above the upper barrier, as much dividend is paid as to bring the process to the lower barrier. When there is no transaction cost, the two barriers collapse to one. 
An iterative approach toward finding optimal barrier and value function is adopted in \cite{EL} where they showed via a fixed point theorem that the convergence of an initial guess of the barrier to the optimal one is exponentially fast, as well as the convergence of the value of the initial barrier to the value function.
When there is a random delay
for the innovations turned to profits, the dual risk model
becomes time inhomogeneous and the ruin probabilities and the ruin time distributions
are studied in \citet{ZhuII}. 

Recently, \citet{FZ} considered the optimal investment on research
and development to minimize the ruin probability for a dual risk model. Additional
investment on a risky market index and the generalization to a state-dependent dual
risk model was also considered in \cite{FZ}.

In a dual risk model, except for the special case including when the probability density function of the jump size
in the compound Poisson process 
is an exponential function or a sum of exponential functions, in general, there is no closed-form
formulas for the value function 
and the no closed-form formula for the optimal barrier. 

In this paper, we are going to focus on the asymptotics for the optimal dividend problem
in a dual risk model. Even though the general problem does not yield closed-form formulas,
the asymptotics are very explicit and intuitive. They also provide useful insights
to help us understand better the nature of the optimal dividend problem in the dual risk model.
For the optimal dividend problem in the dual risk model, we know
that the optimal strategy is a barrier strategy. 
But in practice, most of the companies pay quarterly dividends
and the investors prefer continuous yield rather than the barrier dividend payments
for dual risk models. 
One of the interesting discoveries of our paper is
that in many asymptotic regimes, 
one can find a nearly optimal strategy
that pays the continuous dividend yield when the surplus of the company is sufficiently large. 
The nearly optimal strategy we will present does not pay any dividend
until the surplus reaches a large value and then it starts to pay out dividend continuously.
In other words, a start-up company should wait till it becomes a mature company
before paying out dividend continuously. 
Many high-tech companies, after the successful IPO (Initial Public Offering), remain growth companies
and do not pay dividends for a very long period of time, until they get more mature, or sometimes
in response to big shareholders and activists' demand. 
After that, they will start paying dividends continuously
with dividend yields being constant, and the dividend yield usually increases modestly and consistently
year over year. Therefore, the nearly optimal strategy we will present
is perhaps more consistent with practice. 
It also suggests
that the most commonly adopted dividend strategies in the corporate world
may not be optimal, but at least, nearly optimal.
We will show that in some asymptotic regimes,
continuous dividend yield strategy can be nearly optimal, even though not exactly optimal. 

We will introduce the model setup in Section~\ref{sec:model}. The paper has all the main results in Section~\ref{sec:main} and all the proofs in Section~\ref{sec:proofs}.
Before we proceed, let us introduce the standard notations that will be used
throughout the rest of the paper.
The standard notion $f\sim g$ is used to denote $\frac{f}{g}$ has limit equal to $1$.
We use the notation $f=O(g)$ to denote $\limsup\frac{|f|}{g}<\infty$.
Finally, we use the notation $f=o(1)$ to denote that $f$ has limit equal to $0$.

\section{Model Setup}\label{sec:model}

In this paper, we are interested in studying the dual risk model, see, e.g., \cite{Avanzi}, 
where the surplus or wealth process satisfies the dynamics:
\begin{equation}\label{dual:risk:model}
dX_{t}=-\rho dt+dJ_{t},\qquad X_{0}=x>0,
\end{equation}
where $\rho>0$ is the running cost of the company and $J_{t}=\sum_{i=1}^{N_{t}}Y_{i}$, 
is the stream of profits, where $Y_{i}$ are i.i.d. $\mathbb{R}^{+}$ valued random variables
with common probability density function $p(y)$, $y>0$ and $N_{t}$
is a Poisson process with intensity $\lambda>0$. $Y_{i}$'s are often known
as the innovation sizes or the random future revenues, and $\lambda$
can be interpreted as the innovation rate.
In a classic risk model in the insurance literature, the surplus process increases
continuously in time
with the constant premium rate $\rho$
and decreases due to the claims that follow a compound Poisson process $J_{t}$,
whereas in the dual risk model \eqref{dual:risk:model}, the opposite happens, that is, the surplus process
decreases continuously in time with the constant rate $\rho$,
and increases according to a compound Poisson process $J_{t}$.

Let $\tau:=\inf\{t>0:X_{t}\leq 0\}$ be the ruin time of the company.
Under the assumption that $\lambda\mathbb{E}[Y_{1}]>\rho$, it is well known
that the infinite-horizon ruin probability has the formulas
$\mathbb{P}_{x}(\tau<\infty)=e^{-\alpha x}$, where $x$ is the initial wealth $X_{0}$
of the company and $\alpha$ is the unique position value that satisfies the equation (see, e.g., \cite{Avanzi}):
\begin{equation}
\rho\alpha+\lambda\int_{0}^{\infty}[e^{-\alpha y}-1]p(y)dy=0.
\end{equation}
Similarly, one can also compute the Laplace transform
of the ruin time, $\mathbb{E}_{x}[e^{-\delta\tau}]=e^{-\beta x}$,
where $\beta$ is the unique positive value that satisfies the equation (see, e.g., \cite{Avanzi}):
\begin{equation}
\beta\rho+\lambda\int_{0}^{\infty}[e^{-\beta y}-1]p(y)dy-\delta=0.
\end{equation}

In the pioneering work by \citet{Avanzi}, they studied
the optimal dividend problem in the dual risk model \eqref{dual:risk:model}.
Let $\delta>0$ be the discount rate used in the discount factor and
$D_{t}$ be the rate of the dividend payment at time $t$ by the company
to the shareholders. 

Given a dividend payment strategy $D\in\mathcal{D}$, where
 the set of admissible dividend payment strategies $\mathcal{D}$ is the collection of all adapted nondecreasing c\`{a}dl\`{a}g processes, $D_t$ is the cumulated dividend until time $t$, and the wealth process is given by
\begin{equation}
dX_{t}=-\rho dt-dD_{t}+dJ_{t},\qquad X_{0}=x>0.
\end{equation}
\citet{Avanzi} studied the optimal dividend strategy
for maximizing the expected payments of all the future dividends
to the shareholders until the time of the ruin, that is
\begin{equation}\label{ValueFunction}
V(x):=\sup_{D\in\mathcal{D}}\mathbb{E}_{x}\left[\int_{0}^{\tau}e^{-\delta t}dD_{t}\right],
\end{equation}
with initial wealth $X_{0}=x$. 

They proved that the optimal strategy is a barrier strategy, that is, 
there exists an optimal barrier $b>0$, such that the optimal strategy is as follows. 
When the wealth process is below $b$, no dividend is paid out.
When the wealth process jumps above the barrier $b$ at the stopping time $\tau_{b}$,
$X_{\tau_{b}}-b$ is paid out as the dividends to the shareholders immediately
and the surplus drops to the level $b$. 
The value function $V(x)$ satisfies the equations:
\begin{equation}
V'(x)=1,\qquad\text{for any $x>b$},
\end{equation}
and for any $x<b$,
\begin{equation}\label{plugI}
-\rho V'(x)+\lambda\int_{0}^{\infty}[V(x+y)-V(x)]p(y)dy-\delta V(x)=0,
\end{equation}
with $V(0)=0$. For $x>b$, it is clear that $V(x)=x-b+V(b)$ and hence
we can plug this into \eqref{plugI} and obtain 
\begin{equation}
-\rho V'(x)-\lambda V(x)+\lambda\int_{0}^{b-x}V(x+y)p(y)dy
+\lambda\int_{b-x}^{\infty}(x+y-b+V(b))p(y)dy-\delta V(x)=0.
\end{equation}
By the smooth-fit condition, i.e., $V(x)$ is $C^{1}$ at $x=b$, we have 
\begin{equation}
-\rho-\lambda V(b)+\lambda\int_{0}^{\infty}yp(y)dy-\delta V(b)=0,
\end{equation}
which gives us the value of $V(b)$:
\begin{equation}\label{V(b)}
V(b)=\frac{\lambda\mathbb{E}[Y_{1}]-\rho}{\delta}.
\end{equation}

It was assumed in \citet{Avanzi} that $\rho<\lambda\mathbb{E}[Y_{1}]$. 
Under this condition, $\mathbb{P}_{x}(\tau=\infty)>0$
and $\mathbb{E}_{x}[\tau]=\infty$. 
They also assume that $\delta>0$.

In general, the optimal value $V(x)$ has no closed-form formulas.
But in some special cases, e.g., when $Y_{i}$ are exponentially
distributed, the optimal value $V(x)$ and the optimal barrier $b$
can be computed explicitly, see, e.g., \citet{Avanzi}.
From \citet{Avanzi} et al., when $p(y)=\nu e^{-\nu y}$, we have
\begin{equation}\label{V_exponential}
V(x)=\frac{\lambda}{\nu}\frac{e^{rx}-e^{sx}}{(\rho r+\delta)e^{rb}-(\rho s+\delta)e^{sb}}1_{[0,b]}(x)+(V(b)+x-b)1_{(b,\infty)}(x),
\end{equation}
where $r,s$ are the solutions of
\begin{equation}\label{quadratic_r_s}
\rho\xi^{2}+(\lambda+\delta-\nu\rho)\xi-\nu\delta=0,
\end{equation}
and the optimal $b$ is given by
\begin{equation}\label{b_exponential}
b
=\frac{1}{r-s}\log\left(\frac{s}{r}\frac{\rho s+\delta}{\rho r+\delta}\right).
\end{equation}
Without loss of generality, we can assume that $s>r$ and therefore from \eqref{quadratic_r_s}, we have 
\begin{equation}\label{r&s}
\begin{split}
&s=\frac{-(\lambda+\delta-\nu\rho)+\sqrt{(\lambda+\delta-\nu\rho)^{2}+4\rho\nu\delta}}{2\rho},
\\
&r=\frac{-(\lambda+\delta-\nu\rho)-\sqrt{(\lambda+\delta-\nu\rho)^{2}+4\rho\nu\delta}}{2\rho}.
\end{split}
\end{equation}

If we assume that 
\begin{equation}\label{lessAssump}
\rho\geq\lambda\mathbb{E}[Y_{1}], 
\end{equation}
then $\mathbb{P}_{x}(\tau<\infty)=1$, i.e., the ruin occurs with probability $1$.
Intuitively, it says that when you are certain that the company is going to get ruined, the optimal strategy
to maximize the dividend payments to shareholders is to give all the surplus
of the company to the shareholders immediately.
Therefore, for the finite-horizon case, under assumption \eqref{lessAssump},  we have the same conclusion. 
That is, for any time horizon $T>0$,
\begin{equation}
\sup_{D\in\mathcal{D}}\mathbb{E}_{x}\left[\int_{0}^{\tau\wedge T}e^{-\delta t}dD_{t}\right]=x.
\end{equation}

Notice that the assumption \eqref{lessAssump} is satisfied
when the innovation rate $\lambda\rightarrow 0$, or when the running cost $\rho\rightarrow\infty$. 
Therefore, these two asymptotics are trivial. 
We will study instead the $\lambda\rightarrow\infty$ asymptotics
and the $\rho\rightarrow  0$ asymptotics. 

Also notice that under the usual condition $\rho<\lambda\mathbb{E}[Y_{1}]$, 
$\mathbb{P}_{x}(\tau=\infty)>0$
and $\mathbb{E}_{x}[\tau]=\infty$. Therefore, if $\delta=0$, then
\begin{equation}\label{inftyEqn}
\sup_{D\in\mathcal{D}}\mathbb{E}_{x}\left[\int_{0}^{\tau}dD_{t}\right]=\infty.
\end{equation}
That is because we can always choose a constant dividend payment strategy
that is $D_{t}\equiv\hat{D}$, where $\hat{D}>0$ is a positive constant chosen sufficiently small
so that $\rho+\hat{D}<\lambda\mathbb{E}[Y_{1}]$. Then, let $\hat{\tau}$ be
the ruin time of this wealth process with $D_{t}\equiv\hat{D}$, we 
have $\hat{\tau}<\infty$ a.s. and $\mathbb{E}_{x}[\hat{\tau}]=\infty$.
Then, we have
\begin{equation}
\sup_{D\in\mathcal{D}}\mathbb{E}_{x}\left[\int_{0}^{\tau}dD_{t}\right]
\geq\hat{D}\mathbb{E}_{x}[\hat{\tau}]=\infty.
\end{equation}
Therefore, we expect that when $\delta\rightarrow 0$, 
\begin{equation}\label{finer:estimate}
\sup_{D\in\mathcal{D}}\mathbb{E}_{x}\left[\int_{0}^{\tau}e^{-\delta t}dD_{t}\right]
\rightarrow\infty.
\end{equation}

To obtain a finer estimate than \eqref{finer:estimate}
and understand how fast the term in \eqref{finer:estimate}
goes to infinity, we consider the limit
\begin{equation}
\lim_{\delta\to0}\sup_{D\in\mathcal{D}}\delta\mathbb{E}_{x}\left[\int_{0}^{\tau}e^{-\delta t}dD_{t}\right],
\end{equation}
specifically because of its asymptotic convergence and proper scaling. We will show that the above limit is finite and, therefore, we can determine how fast the value function approaches to infinity
as the discount rate $\delta\rightarrow 0$.

This is also of practical interest because the
value function for generally distributed $Y_{i}$  does not yield closed-form formula
and the asymptotic behavior is particularly useful in the low interest-rate environment
because a common choice of discount rate $\delta$ is by letting $\delta=r$, 
where $r>0$ is the risk-free rate. 

When $\delta=0$, we have seen already from \eqref{inftyEqn} that 
the value function is $\infty$. But we can also study the finite-horizon case
with a time horizon $T>0$. In the finite-horizon case, 
\begin{equation}
\sup_{D\in\mathcal{D}}\mathbb{E}_{x}\left[\int_{0}^{T\wedge\tau}dD_{t}\right]<\infty.
\end{equation}
But from \eqref{inftyEqn}, it is clear that
\begin{equation}
\sup_{D\in\mathcal{D}}\mathbb{E}_{x}\left[\int_{0}^{T\wedge\tau}dD_{t}\right]\rightarrow\infty,
\end{equation}
as $T\rightarrow\infty$. So we are interested to study how fast
this approaches to $\infty$ as $T\rightarrow\infty$.

When the discount rate $\delta\rightarrow\infty$, intuitively, it is clear
that the company should pay all the surplus as dividends to the shareholders
immediately because the cost of carry goes to infinity. 
When the time horizon $T\rightarrow0$, there is little time to accumulate new wealth
and what the company can pay to the shareholders is approximately the initial
wealth of the company.

To summarize, in this paper, we will focus on the following asymptotic regimes:
(i) Large innovation rate ($\lambda$) regime;
(ii) Small running cost ($\rho$) regime;
(iii) Small discount rate ($\delta$) regime; 
(iv) Large time horizon ($T$) regime;
(v) Large discount rate ($\delta$) regime; 
(vi) Small time horizon ($T$) regime.

It is well known that, see, e.g., \citet{Avanzi}, 
the optimal strategy for the optimal dividend problem
\begin{equation}
\sup_{D\in\mathcal{D}}\mathbb{E}_{x}\left[\int_{0}^{\tau}e^{-\delta t}dD_{t}\right]
\end{equation}
is a barrier strategy. 
But in practice, shareholders
prefer continuous dividend yield and most public companies
do not use barrier dividend strategies. 
We will show that in some asymptotic regimes,
continuous dividend yield strategy can be nearly optimal, even though not exactly optimal. 
More precisely, we define our nearly optimal dividend strategy $D^{M,\epsilon}$ as follows:

\begin{definition}\label{DDefn}
The strategy $D^{M,\epsilon}\in\mathcal{D}$ pays no dividend until $\tau_M:=\inf\{t\ge0\;:\;X_t\ge M\}$,  i.e., the first time that the process
jumps above $M$ before the ruin time. 
After $\tau_{M}$, a constant dividend yield $(1-\epsilon)(\lambda\mathbb{E}[Y_{1}]-\rho)$
is paid out to the shareholders.
\end{definition}

Let $\tau_{0}$ be the ruin time of the process $X_{t}=x-\rho t+J_{t}$
with no dividends.
Conditional on  $\tau_{M}< \tau_{0}$, a constant dividend yield $(1-\epsilon)(\lambda\mathbb{E}[Y_{1}]-\rho)$
is paid out to the shareholders until the ruin time $\tau^{\epsilon}$, which is defined as
\begin{equation}
\tau^{\epsilon}:=\inf\{t>0:X_{t}^{\epsilon}\leq 0\},
\end{equation}
where
\begin{equation}\label{eqn:X^epsilon}
dX_{t}^{\epsilon}=-\rho dt-(1-\epsilon)(\lambda\mathbb{E}[Y_{1}]-\rho)dt+dJ^{M}_{t},
\end{equation}
with $X_{0}^{\epsilon}=X_{\tau_M}$ and $J^{M}_t:=J_{\tau_{M}+t}-J_{\tau_{M}}$. 

Let $\tau^{M,\epsilon}$ be the ruin time of the process with dividend strategy $D^{M,\epsilon}$. Then, when $\tau_M>\tau_{0}$, we have $\tau^{M,\epsilon}=\tau_{0}$ and when $\tau_M<\tau_{0}$, we have $\tau^{M,\epsilon}=\tau_M+\tau^{\epsilon}$.

We will show that for small discount rate $\delta$ , large time horizon $T$, and  large innovation rate $\lambda$ regimes that for sufficiently large $M$ and sufficiently small $\epsilon$, $D^{M,\epsilon}$ is nearly optimal.

\section{Main Results}\label{sec:main}

We will focus on the following asymptotic regimes:
Large innovation rate ($\lambda$) regime (Section~\ref{sec:large:lambda});
Small running cost ($\rho$) regime (Section~\ref{sec:small:rho});
Small discount rate ($\delta$) regime (Section~\ref{sec:small:delta}); 
Large time horizon ($T$) regime (Section~\ref{sec:large:T});
Large discount rate ($\delta$) regime (Section~\ref{sec:large:delta}); 
Small time horizon ($T$) regime (Section~\ref{sec:small:T}).
We will use the notation $V(x;p)$ to emphasize the dependence of the value function $V$ on the parameter $p$; for example for small $\delta$ regime we use $V(x;\delta)$.


\subsection{Large Innovation Rate ($\lambda$) Regime}\label{sec:large:lambda}

Let us consider the large innovation rate ($\lambda$) asymptotics in this section.
This corresponds to the regime when the company has a large growth rate
and fast expansions. 
For the moment, let us assume that $Y_{i}$ are exponentially distributed, 
say $p(y)=\nu e^{-\nu y}$. 
Let us recall that for $x\ge b$ we have 
\begin{equation}
V(x;\lambda)=x-b+V(b;\lambda).
\end{equation}
Now as $\lambda\rightarrow\infty$, we can easily see from \eqref{r&s} that
$s\rightarrow 0$ and $r\rightarrow-\infty$
and by \eqref{b_exponential} $b\rightarrow 0$. The asymptotics of the value function is fully determined by
\begin{equation}
V(b;\lambda)\sim\frac{\lambda}{\nu\delta},\qquad\text{as $\lambda\rightarrow\infty$}.
\end{equation}

More generally, we have
\begin{equation}
V(b;\lambda)\sim\frac{\lambda\mathbb{E}[Y_{1}]}{\delta},
\qquad\text{as $\lambda\rightarrow\infty$}.
\end{equation}
The intuition is the following. As the innovation rate $\lambda\rightarrow\infty$, the ruin probability
will tend to zero. Let us assume that $X_{0}=x$. At any given time $t$, after a small time step $\Delta t$, 
the expected wealth increases to $x+(\lambda\mathbb{E}[Y_{1}]-\rho)\Delta t$, and then
at time $t+\Delta t$, you immediately pay the amount $(\lambda\mathbb{E}[Y_{1}]-\rho)\Delta t$ to the shareholders
and then you restart with wealth $x$ and continue the process. 
By letting $\Delta t\rightarrow 0$, we get
\begin{equation}
V(x;\lambda)\sim(\lambda\mathbb{E}[Y_{1}]-\rho)\int_{0}^{\infty}e^{-\delta t}dt=\frac{\lambda\mathbb{E}[Y_{1}]-\rho}{\delta}
\sim\frac{\lambda\mathbb{E}[Y_{1}]}{\delta}.
\end{equation}

\begin{theorem}\label{LargeLambdaThm}
(i) Assume the discount rate $\delta>0$. We have
\begin{equation}
\sup_{D\in\mathcal{D}}\mathbb{E}_{x}\left[\int_{0}^{\tau}e^{-\delta t}dD_{t}\right]
=\frac{\lambda\mathbb{E}[Y_{1}]}{\delta}(1+o(1)),
\quad
\text{as the innovation rate $\lambda\rightarrow\infty$}.
\end{equation}

(ii) Assume the discount rate $\delta>0$, for any time horizon $T>0$, we have
\begin{equation}
\sup_{D\in\mathcal{D}}\mathbb{E}_{x}\left[\int_{0}^{T\wedge\tau}e^{-\delta t}dD_{t}\right]
=\frac{\lambda\mathbb{E}[Y_{1}]}{\delta}(1-e^{-\delta T})(1+o(1)),
\quad\text{as the innovation rate $\lambda\rightarrow\infty$}.
\end{equation}

(iii) Assume the discount rate $\delta=0$, for any time horizon $T>0$, we have
\begin{equation}
\sup_{D\in\mathcal{D}}\mathbb{E}_{x}\left[\int_{0}^{T\wedge\tau}dD_{t}\right]
=\lambda\mathbb{E}[Y_{1}]T(1+o(1)),
\quad\text{as the innovation rate $\lambda\rightarrow\infty$}.
\end{equation}
\end{theorem}

We can see from Theorem~\ref{LargeLambdaThm}
that in the large innovation rate regime, i.e., with large $\lambda$,
the maximized expected discounted payment of all
the future dividends is large, which is
consistent with the intuition that innovation boosts the future value of the company and 
hence the payout to the shareholders. Indeed Theorem~\ref{LargeLambdaThm} implies
that the maximized expected discounted payment of all
the future dividends is of the order $\lambda$, i.e., linear in the innovation rate $\lambda$.

The optimal strategy is always the barrier strategy. 
But in practice, shareholders prefer continuous dividend yield.
Consider a dividend strategy $D^{M,\epsilon}$ from Definition \ref{DDefn} that pays continuous dividend
yield $(1-\epsilon)(\lambda\mathbb{E}[Y_{1}]-\rho)$ after the surplus process
hits above $M$.
We will show that the $D^{M,\epsilon}$ strategy is nearly optimal:

\begin{proposition}\label{epsilonOptimalLargeLambda}
Given any one of the cases in Theorem \ref{LargeLambdaThm}, 
for any $\varepsilon>0$, there exist some $M', \epsilon'>0$
such that 
for any $M>M'$ and $0<\epsilon<\epsilon'$, 
the $D^{M,\epsilon}$ strategy is $\varepsilon\lambda$-optimal, 
i.e., for any $\varepsilon>0$,  
and for any $\delta\geq 0$ and $T\in(0,\infty)$
or $\delta>0$ and $T=\infty$, 
there exist some $M', \epsilon'>0$
such that 
for any $M>M'$ and $0<\epsilon<\epsilon'$,
\begin{equation*}
\left|\sup_{D\in\mathcal{D}}\mathbb{E}_{x}\left[\int_{0}^{T\wedge\tau}e^{-\delta t}dD_{t}\right]
-\mathbb{E}_{x}\left[\int_{0}^{T\wedge\tau}e^{-\delta t}dD_{t}^{M,\epsilon}\right]\right|\leq\varepsilon\lambda,
\end{equation*}
for any sufficiently large $\lambda>0$.
\end{proposition}

We can see from Proposition~\ref{epsilonOptimalLargeLambda}
that $D^{M,\epsilon}$ strategy (defined in Definition~\ref{DDefn})
is nearly optimal in the large innovation rate ($\lambda$) regime, that is, the company does not pay out any dividend
until the surplus jumps above a high threshold $M$, 
and afterwards, a constant dividend yield $(1-\epsilon)(\lambda\mathbb{E}[Y_{1}]-\rho)$
is paid out to the shareholders. Managerial implication is that 
when the company has significant innovation power, it should focus on accumulating wealth by continuing
to innovate and deferring dividend payments to the shareholders until
the company has reached a considerably large scale.

\begin{remark}
It is also possible to have a discrete dividend strategy that is nearly optimal, see Remark \ref{Alternative}.
\end{remark}


\subsection{Small Running Cost ($\rho$) Regime}\label{sec:small:rho}

Let us consider the small running cost ($\rho$) asymptotics in this section.
We will see that when the running cost $\rho=0$, for any $x>0$, 
\begin{equation}
V(x;\rho)=x+\frac{\lambda\mathbb{E}[Y_{1}]}{\delta}.
\end{equation}
The intuition is that since when the running cost $\rho\rightarrow 0$, 
the probability of ruin is negligible, and if the discount rate $\delta>0$, it is optimal to give almost all the surplus
as dividends to the shareholders immediately rather than holding it. Then, each time the surplus increases, we also pay excess surplus as dividend. More precisely, for $\epsilon>0$, let $D^\epsilon$ be the strategy which pays dividend $x-\epsilon$ at the beginning and any surplus above $\epsilon$ thereafter.
At time $0$, with initial wealth $x$, you give $x-\epsilon$ dollars
as amount of dividends to the shareholders. Then at time of the $n$th jump of the process $J$, $\tau^{(n)}$, the  wealth grows to $Y_{n}+\epsilon$
and then you give  $Y_{n}$ as dividend. Therefore, when the running cost $\rho=0$,
\begin{equation}\label{rho=0}
V(x;\rho)=x+\int_{0}^{\infty}e^{-\delta t}\lambda\mathbb{E}[Y_{1}]dt
=x+\frac{\lambda\mathbb{E}[Y_{1}]}{\delta}.
\end{equation}

To have a more rigorous proof, notice that for $x>b$, 
$V(x;\rho)=x-b+V(b;\rho)$. Recall from \eqref{V(b)} that $V(b;\rho)=\frac{\lambda\mathbb{E}[Y_{1}]-\rho}{\delta}$. 
Hence when the running cost $\rho=0$, we have \eqref{rho=0}.
In fact, the optimal strategy for the finite horizon case should be the same.

\begin{theorem}\label{SmallRhoThm}
Assume that the running cost $\rho=0$.
 
(i) For  any discount rate $\delta>0$, we have
\begin{equation}\label{SmallRhoT=infty}
\sup_{D\in\mathcal{D}}\mathbb{E}_{x}\left[\int_{0}^{\tau}e^{-\delta t}dD_{t}\right]
=x+\frac{\lambda\mathbb{E}[Y_{1}]}{\delta}.
\end{equation}

(ii) For any discount rate $\delta>0$ and time horizon $T>0$, we have
\begin{equation}\label{SmallRhoTfinitedelta>0}
\sup_{D\in\mathcal{D}}\mathbb{E}_{x}\left[\int_{0}^{T\wedge\tau}e^{-\delta t}dD_{t}\right]
=x+\frac{\lambda\mathbb{E}[Y_{1}]}{\delta}(1-e^{-\delta T}).
\end{equation}

(iii) When the discount rate $\delta=0$, for any time horizon $T>0$, we have
\begin{equation}\label{SmallRhoTfinitedelta=0}
\sup_{D\in\mathcal{D}}\mathbb{E}_{x}\left[\int_{0}^{T\wedge\tau}dD_{t}\right]
=x+\lambda\mathbb{E}[Y_{1}]T.
\end{equation}
\end{theorem}

\begin{proposition}\label{epsilonOptimalSmallRho}
Assume that the running cost $\rho=0$.
Given any of the cases in Theorem \ref{SmallRhoThm}, 
for any $\varepsilon>0$, there exists an $\epsilon>0$
such that the strategy $D^{\epsilon}$ is $\varepsilon$-optimal, 
i.e., 
for any $\varepsilon>0$
and for any $\delta\geq 0$, $T\in(0,\infty)$
or $\delta>0$, $T=\infty$, there exists an $\epsilon>0$
such that
\begin{equation*}
\left|\sup_{D\in\mathcal{D}}\mathbb{E}_{x}\left[\int_{0}^{T\wedge\tau}e^{-\delta t}dD_{t}\right]
-\mathbb{E}_{x}\left[\int_{0}^{T\wedge\tau}e^{-\delta t}dD_{t}^{\epsilon}\right]\right|\leq\varepsilon.
\end{equation*}
\end{proposition}

We can see from Proposition~\ref{epsilonOptimalSmallRho}
that $D^\epsilon$ is nearly optimal, that is, to pay dividend $x-\epsilon$ at the beginning and any surplus above $\epsilon$ thereafter.
The intuition is that when the running cost $\rho$ is small, so is the ruin risk. 
Since the future value of the payment is always smaller than the present value
due to the discount rate, it is nearly optimal to pay as much dividend as possible to the shareholders
at time zero.

In practice, the running cost $\rho$ should always be positive,
even though it can be small. Therefore, the discussions
for the $\rho=0$ case in Theorem~\ref{SmallRhoThm} and Proposition~\ref{epsilonOptimalSmallRho}
serve as a first-order approximation when $\rho>0$ is small, instead
of describing a real world scenario.
Indeed, for the special case when $p(y)=\nu e^{-\nu y}$, we can even find the second-order approximation as 
the running cost $\rho\rightarrow 0$.
Let us recall that for $x>b$,  $V(x;\rho)=x-b+V(b;\rho)$, where $b$ is given by \eqref{b_exponential}. From \eqref{r&s}, it is easy that see that $r\sim\frac{-(\lambda+\delta)}{\rho}$
and $s\sim\frac{\nu\delta}{\lambda+\delta}$. 
This implies that 
\begin{equation}
b\sim\frac{\rho}{-(\lambda+\delta)}\log\left(\frac{\nu\delta}{\lambda+\delta}\frac{-\rho}{\lambda+\delta}\frac{\delta}{-\lambda}\right)
\sim\frac{-\rho\log\rho}{\lambda+\delta},
\end{equation}
as the running cost $\rho\rightarrow 0$. Hence, we conclude that for any $x>0$,
\begin{equation}
V(x;\rho)\sim x+\frac{\lambda}{\nu\delta}+\frac{\rho\log\rho}{\lambda+\delta},
\end{equation}
as the running cost $\rho\rightarrow 0$.


\subsection{Small Discount Rate ($\delta$) Regime}\label{sec:small:delta}

Let us consider the small discount rate ($\delta$) asymptotics
in this section. This is practically relevant when the interest rate
is low, which is a new environment, e.g.,
after the 2008 financial crisis in the United States.
Recall that $\lambda\mathbb{E}[Y_{1}]>\rho$ so that
$\mathbb{P}_{x}(\tau=\infty)>0$ and $\mathbb{E}_{x}[\tau]=\infty$.
Therefore, by considering a constant dividend yield strategy, it easily follows that
\begin{equation}
V(x;\delta)=\sup_{D\in\mathcal{D}}\mathbb{E}_{x}\left[\int_{0}^{\tau}e^{-\delta t}dD_{t}\right]\rightarrow\infty,
\qquad
\text{as $\delta\rightarrow 0$}.
\end{equation}
We are interested to see how fast it goes to $\infty$ as $\delta\rightarrow 0$.

To get some intuitions, let us first consider the case when $p(y)=\nu e^{-\nu y}$
so that there are explicit formulas for the optimal value function
and the optimal barrier. Let us recall that for $x\le b$,
\begin{equation}
V(x;\delta)=\frac{\lambda}{\nu}\frac{e^{rx}-e^{sx}}{(\rho r+\delta)e^{rb}-(\rho s+\delta)e^{sb}},
\end{equation}
where $r,s$ are given by \eqref{r&s}
and the optimal $b$ is given by \eqref{b_exponential}.
Therefore, as $\delta\rightarrow 0$, we have
$s\rightarrow 0$ and $r\rightarrow\nu-\frac{\lambda}{\rho}$ and $b\to\infty$.

By the definition of the optimal $b$, we have
\begin{equation}
\frac{r}{s}(\rho r+\delta)e^{br}=(\rho s+\delta)e^{bs}.
\end{equation}
This implies that for $x\le b$
\begin{equation}
V(x;\delta)=\frac{\lambda}{\nu}\frac{e^{rx}-e^{sx}}{\left(1-\frac{r}{s}\right)(\rho r+\delta)e^{br}}.
\end{equation}
Notice that
\begin{equation}
e^{br}=e^{\frac{r}{r-s}\log\left(\frac{s^{2}(\nu-r)}{r^{2}(\nu-s)}\right)}
=\left(\frac{s^{2}(\nu-r)}{r^{2}(\nu-s)}\right)^{\frac{r}{r-s}}
\sim\frac{\frac{\lambda}{\rho}s^{2}}{\nu(\nu-\frac{\lambda}{\rho})^{2}},
\end{equation}
as $\delta\rightarrow 0$. 
Therefore, we have
\begin{equation}
V(x;\delta)\sim\frac{\lambda(1-e^{(\nu-\frac{\lambda}{\rho})x})}{r\nu(\rho\nu-\lambda)\frac{\lambda}{\rho}}
\nu\left(\nu-\frac{\lambda}{\rho}\right)^{2}\frac{1}{s},
\end{equation}
as $\delta\rightarrow 0$.

Note that $s\sim\frac{\nu}{\lambda-\nu\rho}\delta$ as $\delta\rightarrow 0$. 
Therefore, we conclude that
\begin{equation}
\lim_{\delta\rightarrow 0}\delta V(x;\delta)
=\frac{\lambda-\nu\rho}{\nu}(1-e^{(\nu-\frac{\lambda}{\rho})x}).
\end{equation}

For generally distributed $Y_{i}$, we provide some heuristic arguments. 
Notice that for the optimization problem 
\begin{equation}
V(x;\delta)=\sup_{D\in\mathcal{D}}\mathbb{E}_{x}\left[\int_{0}^{\tau}e^{-\delta t}dD_{t}\right],
\end{equation}
the optimal strategy is a barrier strategy, that is,
\begin{equation}
V'(x;\delta)=1,\qquad\text{for any $x>b$},
\end{equation}
and for any $x<b$,
\begin{equation}
-\rho V'(x;\delta)+\lambda\int_{0}^{\infty}[V(x+y;\delta)-V(x;\delta)]p(y)dy-\delta V(x;\delta)=0,
\end{equation}
with $V(0;\delta)=0$. 

As $\delta\rightarrow 0$, the optimal $b\rightarrow\infty$. 
Therefore, for fixed $x$, we have $x/b\rightarrow 0$ and $V(x;\delta)$ roughly satisfies the equation:
\begin{equation}
-\rho V'(x;\delta)+\lambda\int_{0}^{\infty}[V(x+y;\delta)-V(x;\delta)]p(y)dy=0,
\end{equation}
with $V(0;\delta)=0$, which yields that
\begin{equation}
V(x;\delta)\sim c(1-e^{-\alpha x}),
\end{equation}
for some positive constant $c$, where $\alpha$ is the unique positive solution to the equation:
\begin{equation}
\rho\alpha+\lambda\int_{0}^{\infty}[e^{-\alpha y}-1]p(y)dy=0.
\end{equation}
Next, let us determine the positive constant $c$. 
Recall that by \eqref{V(b)}, for any $x>b$,
\begin{equation}
V(x;\delta)=\frac{\lambda\mathbb{E}[Y_{1}]-\rho}{\delta}+(x-b)+o(1).
\end{equation}
This implies that for $x$ fixed and large, $V(x;\delta)\delta\sim\lambda\mathbb{E}[Y_{1}]-\rho$ as $\delta\rightarrow 0$.
Hence, we have $c=\frac{\lambda\mathbb{E}[Y_{1}]-\rho}{\delta}$ and
\begin{equation}
\lim_{\delta\rightarrow 0}\delta V(x;\delta)=(\lambda\mathbb{E}[Y_{1}]-\rho)(1-e^{-\alpha x}).
\end{equation}

Indeed, we can prove it rigorously:

\begin{theorem}\label{SmallDeltaThm}
We have the following asymptotic result for small discount rate $\delta$:
\begin{equation}
\sup_{D\in\mathcal{D}}\mathbb{E}_{x}\left[\int_{0}^{\tau}e^{-\delta t}dD_{t}\right]
=\frac{\lambda\mathbb{E}[Y_{1}]-\rho}{\delta}(1-e^{-\alpha x})(1+o(1)),
\qquad
\text{as the discount rate $\delta\rightarrow 0$}.
\end{equation}
\end{theorem}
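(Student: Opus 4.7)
The plan is to sandwich $\delta V(x;\delta)$ between matching lower and upper bounds, both tending to $(\lambda\mathbb{E}[Y_1]-\rho)(1-e^{-\alpha x})$. The lower bound will use the admissible strategy $D^{M,\epsilon}$ of Definition~\ref{DDefn}; the upper bound will combine an a priori estimate with a compactness and uniqueness argument for the limiting integro-differential equation.

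For the lower bound, since $D^{M,\epsilon}\in\mathcal{D}$,
\[
V(x;\delta)\geq\mathbb{E}_x\Bigl[\int_0^{\tau^{M,\epsilon}}e^{-\delta t}\,dD_t^{M,\epsilon}\Bigr]=\frac{(1-\epsilon)(\lambda\mathbb{E}[Y_1]-\rho)}{\delta}\,\mathbb{E}_x\bigl[(e^{-\delta\tau_M}-e^{-\delta\tau^{M,\epsilon}})\mathbf{1}_{\tau_M<\tau_0}\bigr].
\]
Multiplying by $\delta$ and invoking dominated convergence as $\delta\to 0$ with $M,\epsilon$ fixed yields $\liminf_{\delta\to 0}\delta V(x;\delta)\geq(1-\epsilon)(\lambda\mathbb{E}[Y_1]-\rho)\mathbb{P}_x(\tau_M<\tau_0,\tau^\epsilon=\infty)$. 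Since $X^\epsilon$ has strictly positive net drift $\epsilon(\lambda\mathbb{E}[Y_1]-\rho)$, its ruin probability starting from $X^\epsilon_0\geq M$ vanishes as $M\to\infty$; sending $M\to\infty$ and then $\epsilon\to 0$, together with $\mathbb{P}_x(\tau_M<\tau_0)\to\mathbb{P}_x(\tau_0=\infty)=1-e^{-\alpha x}$, produces the lower bound.

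For the upper bound, monotonicity of $V$ and the linear continuation past $b(\delta)$, together with \eqref{V(b)}, give the uniform estimate $0\leq\delta V(x;\delta)\leq\lambda\mathbb{E}[Y_1]-\rho+\delta x$, so $\delta V(\,\cdot\,;\delta)$ is equi-bounded on compact sets. Evaluating the integro-differential equation at $x=0$ produces an $O(1)$ bound on $\delta V'(0;\delta)$, which by concavity of $V$ upgrades to equi-Lipschitz continuity of $\delta V(\,\cdot\,;\delta)$. Arzel\`a--Ascoli then extracts a subsequence $\delta_n\to 0$ along which $\delta_n V(\,\cdot\,;\delta_n)\to\hat U$ uniformly on compacts. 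The equation $\mathcal{L}V-\delta V=0$ holds on $[0,b(\delta))$, while a direct computation gives $\mathcal{L}V-\delta V=-\delta(x-b(\delta))$ on $[b(\delta),\infty)$; in either regime $\delta\mathcal{L}V\to 0$ pointwise, so passing to the limit yields
\[
-\rho\hat U'(x)+\lambda\int_0^\infty[\hat U(x+y)-\hat U(x)]p(y)dy=0\quad\text{on }[0,\infty),
\]
with $\hat U(0)=0$ and $\hat U$ bounded. Since $\beta\mapsto\lambda(\mathbb{E}[e^{\beta Y_1}]-1)-\rho\beta$ is strictly convex and vanishes exactly at $\beta=0$ and $\beta=-\alpha$, the bounded solutions of the limiting equation are spanned by $\{1,e^{-\alpha x}\}$; combined with $\hat U(0)=0$ this forces $\hat U(x)=A(1-e^{-\alpha x})$. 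The lower bound just established and the a priori bound together pin $A=\lambda\mathbb{E}[Y_1]-\rho$, and uniqueness of the subsequential limit completes the argument.

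The principal technical obstacle is the passage to the limit in the integro-differential equation, since the convergence $\delta V(\,\cdot\,;\delta)\to\hat U$ is only uniform on compacts, and one must handle both the derivative term and the convolution against $p$. My plan is to integrate the equation over an interval $[0,A]$ so that only integrals of $\delta V$ and of its convolution appear; these pass to the limit by uniform convergence on compacts and by dominated convergence in the tail (the integrand is dominated by $\lambda\mathbb{E}[Y_1]-\rho+\delta(x+y)$, which is integrable against $p$). Differentiating the resulting integral identity in $A$ then recovers the pointwise limiting equation for $\hat U$.
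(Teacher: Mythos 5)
Your lower bound via $D^{M,\epsilon}$ is essentially identical to the paper's. The upper bound is genuinely different: the paper constructs an explicit supersolution $U_1(x)=(\lambda\mathbb{E}[Y_1]-\rho)(1-e^{-\alpha(\delta)x})+\delta x$ (with $\alpha(\delta)$ a perturbed exponent solving $\rho\alpha+\lambda\int(e^{-\alpha y}-1)p\,dy=\delta$) and verifies $\delta V\le U_1$ directly via It\^o, which is shorter and also yields a quantitative bound for each fixed $\delta$; you instead use an a priori bound, Arzel\`a--Ascoli, passage to the limit in the integro-differential equation, and a uniqueness argument. The compactness route is more robust in principle, but it introduces one real gap.

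The gap is the assertion that ``the bounded solutions of the limiting equation are spanned by $\{1,e^{-\alpha x}\}$,'' justified only by counting the roots of $\beta\mapsto\lambda(\mathbb{E}[e^{\beta Y_1}]-1)-\rho\beta$. Characteristic-root counting pins down the solution space of a constant-coefficient ODE, but the limiting equation $-\rho\hat U'(x)+\lambda\int_0^\infty[\hat U(x+y)-\hat U(x)]p(y)\,dy=0$ is non-local: $\hat U'(x)$ depends on $\hat U$ on all of $[x,\infty)$, so there is no finite-dimensional initial-value structure, and boundedness alone does not obviously confine a solution to the span of two exponentials. You need the extra structure of $\hat U$ that you actually have: as a uniform-on-compacts limit of the increasing, concave, uniformly bounded family $\delta V(\cdot;\delta)$, $\hat U$ is itself increasing, concave and bounded, hence has a finite limit $L$ at $+\infty$. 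Then the martingale argument (for $u$ bounded with $\mathcal{A}u=0$, $u(X_{t\wedge\tau_0})$ is a bounded martingale; on $\{\tau_0<\infty\}$ it converges to $u(0)=0$, on $\{\tau_0=\infty\}$ it converges to $L$ since $X_t\to\infty$) gives $\hat U(x)=L\,\mathbb{P}_x(\tau_0=\infty)=L(1-e^{-\alpha x})$, after which the sandwich pins $L=\lambda\mathbb{E}[Y_1]-\rho$. Also, two smaller points: concavity of $V$ is a standard fact for this model but is invoked here without citation or proof; it is in fact avoidable, since evaluating the HJB at every $x<b(\delta)$ and using $V(x+y;\delta)-V(x;\delta)\le y+(\lambda\mathbb{E}[Y_1]-\rho)/\delta$ gives $\delta V'(x;\delta)\le \frac{\lambda}{\rho}\big(\delta\mathbb{E}[Y_1]+\lambda\mathbb{E}[Y_1]-\rho\big)$ directly, with $\delta V'=\delta$ on $[b(\delta),\infty)$, which is the equi-Lipschitz bound without appealing to concavity. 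Your ``integrate over $[0,A]$ then differentiate'' scheme for passing to the limit is correct as written.
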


We can see from Theorem~\ref{SmallDeltaThm}
that in the low interest rate environment, i.e., the small discount rate $\delta$,
the maximized expected discounted payment of all
the future dividends is large, which is
consistent with the conventional wisdom that low interest rate environment
boosts the asset values. Indeed Theorem~\ref{SmallDeltaThm} implies
that the maximized expected discounted payment of all
the future dividends is of the order $1/\delta$, i.e., the reciprocal of the discount rate.

\begin{proposition}\label{epsilonOptimalSmallDelta}
For any $\varepsilon>0$, let $\delta>0$ be such that
\begin{equation}
\left|\delta\sup_{D\in\mathcal{D}}\mathbb{E}_{x}\left[\int_{0}^{\tau}e^{-\delta t}dD_{t}\right]-(\lambda\mathbb{E}[Y_{1}]-\rho)(1-e^{-\alpha x})\right|
\leq\varepsilon.
\end{equation}
Then, there exist some $M', \epsilon'>0$
such that 
for any $M>M'$ and $0<\epsilon<\epsilon'$, 
the $D^{M,\epsilon}$ strategy is an $\frac{\varepsilon}{\delta}$-optimal strategy, i.e., 
\begin{equation}
\left|\delta\mathbb{E}_{x}\left[\int_{0}^{\tau}e^{-\delta t}dD^{M,\epsilon}_{t}\right]-(\lambda\mathbb{E}[Y_{1}]-\rho)(1-e^{-\alpha x})\right|
\leq\varepsilon.
\end{equation}
\end{proposition}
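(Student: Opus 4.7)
Since $D^{M,\epsilon}\in\mathcal{D}$, the upper bound
\[
\delta\mathbb{E}_x\!\left[\int_0^{\tau} e^{-\delta t}\,dD^{M,\epsilon}_t\right]\le \delta V(x;\delta)\le (\lambda\mathbb{E}[Y_1]-\rho)(1-e^{-\alpha x})+\varepsilon
\]
is immediate from the hypothesis, so the plan is to establish the matching lower bound. I would express the value of $D^{M,\epsilon}$ in closed form via the strong Markov property at $\tau_M$, then control each factor by an appropriate choice of $M$ and $\epsilon$.

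Setting $c:=(1-\epsilon)(\lambda\mathbb{E}[Y_1]-\rho)$, the dividend is paid continuously at rate $c$ on $[\tau_M,\tau_M+\tau^{\epsilon}]$ conditional on $\{\tau_M<\tau_0\}$; integrating the discount factor and taking expectation gives
\[
\delta\mathbb{E}_x\!\left[\int_0^{\tau}e^{-\delta t}\,dD^{M,\epsilon}_t\right]
= c\,\mathbb{E}_x\!\left[\mathbf{1}_{\tau_M<\tau_0}\,e^{-\delta\tau_M}\bigl(1-e^{-\beta^{\epsilon}X_{\tau_M}}\bigr)\right],
\]
where $\beta^{\epsilon}>0$ is the analogue, for the post-$\tau_M$ process $X^{\epsilon}$ (which has positive net drift $\epsilon(\lambda\mathbb{E}[Y_1]-\rho)$), of the root of the Laplace-transform equation stated in the introduction. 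Using $X_{\tau_M}\ge M$ on $\{\tau_M<\tau_0\}$, the inequality $e^{-\delta\tau_M}\ge 1-\delta\tau_M$, and the inclusion $\{\tau_0=\infty\}\subseteq\{\tau_M<\tau_0\}$ (which already yields $\mathbb{P}_x(\tau_M<\tau_0)\ge 1-e^{-\alpha x}$), the above quantity is bounded below by
\[
c\,(1-e^{-\beta^{\epsilon}M})\Bigl((1-e^{-\alpha x})-\delta\,\mathbb{E}_x[\tau_M\mathbf{1}_{\tau_M<\tau_0}]\Bigr).
\]

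To estimate the residual, I would use that $X_t$ has positive net drift $\lambda\mathbb{E}[Y_1]-\rho$, so that $\tau_M/M\to 1/(\lambda\mathbb{E}[Y_1]-\rho)$ almost surely on $\{\tau_0=\infty\}$ by the strong law, and a Wald-type stopping argument yields $\mathbb{E}_x[\tau_M\mathbf{1}_{\tau_M<\tau_0}]\le KM$ for a constant $K=K(x)$ and all sufficiently large $M$. The main obstacle is managing the three-way balance required to make the combined bound close to $(\lambda\mathbb{E}[Y_1]-\rho)(1-e^{-\alpha x})$: one needs $\epsilon$ small so that $1-\epsilon\approx 1$, $M$ large (depending on $\epsilon$) so that $e^{-\beta^{\epsilon}M}$ is small, and $\delta M$ small so that the correction $\delta KM$ is negligible. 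Here the hypothesis is crucial: by Theorem~\ref{SmallDeltaThm} the bound $|\delta V(x;\delta)-(\lambda\mathbb{E}[Y_1]-\rho)(1-e^{-\alpha x})|\le\varepsilon$ continues to hold if $\delta$ is replaced by any smaller positive value, so I would first fix $\epsilon$ small (pinning down $\beta^{\epsilon}\ge\alpha^{\epsilon}>0$), then fix $M$ large depending on $\epsilon$ so that $e^{-\beta^{\epsilon}M}$ is within tolerance, and finally the smallness of $\delta$ forced by the hypothesis makes $\delta KM$ negligible, delivering the matching lower bound.
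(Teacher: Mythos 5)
Your overall architecture—lower bound $D^{M,\epsilon}$ by a closed form via the strong Markov property at $\tau_M$, then tune $M$ and $\epsilon$—matches the paper's. The upper bound observation is correct. But from there your route diverges from the paper's, and the divergence introduces gaps.

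The paper's proof, given jointly with the proof of Theorem~\ref{SmallDeltaThm}, holds $M$ and $\epsilon$ \emph{fixed} and sends $\delta\to0$ first, using only the elementary convergences
\[
\mathbb{E}_x\bigl[e^{-\delta\tau_M}\mathbf{1}_{\tau_M<\tau_0}\bigr]\longrightarrow\mathbb{P}_x(\tau_M<\tau_0),\qquad
\mathbb{E}\bigl[e^{-\delta\tau^{\epsilon}}\mid X_0^{\epsilon}=M\bigr]\longrightarrow e^{-\alpha^{\epsilon}M},
\]
so that the $\liminf$ of $\delta\mathbb{E}_x[\int_0^{\tau}e^{-\delta t}\,dD^{M,\epsilon}_t]$ is bounded below by $\mathbb{P}_x(\tau_M<\tau_0)(1-\epsilon)(\lambda\mathbb{E}[Y_1]-\rho)(1-e^{-\alpha^{\epsilon}M})$, and only then lets $M\to\infty$, $\epsilon\to0$. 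You instead linearize $e^{-\delta\tau_M}\ge 1-\delta\tau_M$ and invoke a Wald-type estimate $\mathbb{E}_x[\tau_M\mathbf{1}_{\tau_M<\tau_0}]\le KM$. That estimate requires controlling the overshoot $X_{\tau_M}-M$, which has bounded expectation only under an additional second-moment assumption on $Y_1$ that the paper does not impose; the paper's use of dominated/monotone convergence sidesteps this.

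The more serious gap is in the last step. You assert two things that are not justified: (a) that "by Theorem~\ref{SmallDeltaThm} the bound continues to hold if $\delta$ is replaced by any smaller positive value"—the theorem gives a limit, not monotonicity of $\delta\mapsto\delta V(x;\delta)$; and (b) that "the smallness of $\delta$ forced by the hypothesis makes $\delta KM$ negligible." The hypothesis says only that $\delta V(x;\delta)$ is within $\varepsilon$ of the limit; without a quantitative rate for Theorem~\ref{SmallDeltaThm}, this does not translate into an explicit upper bound on $\delta$, so you cannot conclude $\delta K M\le\varepsilon$ once you have already fixed $M$ and $\epsilon$ in terms of $\varepsilon$. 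The three-way balance you describe therefore does not close on the stated hypothesis alone; you would need to either derive a convergence rate for the theorem or choose $M$ as a function of $\delta$ (e.g.\ $M\sim\delta^{-1/2}$), which is closer in spirit to the paper's fixed-$(M,\epsilon)$, $\delta\to0$ ordering.
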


We can see from Proposition~\ref{epsilonOptimalSmallDelta}
that $D^{M,\epsilon}$ strategy (defined in Definition~\ref{DDefn})
is nearly optimal, that is, the company does not pay out any dividend
until the surplus jumps above a high threshold $M$, 
and afterwards, a constant dividend yield $(1-\epsilon)(\lambda\mathbb{E}[Y_{1}]-\rho)$
is paid out to the shareholders. Managerial implication is that in the low interest rate environment, 
the company should focus on accumulating wealth and deferring dividend payments until
the company has reached a considerably large scale.

\subsection{Large Time Horizon ($T$) Regime}\label{sec:large:T}

Let us consider the large time horizon ($T$) asymptotics in this section. 
Denote the value function as
\begin{equation}
V(x;T):=\sup_{D\in\mathcal{D}}\mathbb{E}_{x}\left[\int_{0}^{T\wedge\tau}e^{-\delta t}dD_{t}\right].
\end{equation}
Let us differentiate two cases: $\delta>0$ and $\delta=0$.

When $\delta>0$, intuitively, it is clear that the finite-horizon value function
will converge to the infinite-horizon value function as time horizon $T\rightarrow\infty$:
\begin{equation}
V(x;T)=\sup_{D\in\mathcal{D}}\mathbb{E}_{x}\left[\int_{0}^{T\wedge\tau}e^{-\delta t}dD_{t}\right]
\rightarrow
\sup_{D\in\mathcal{D}}\mathbb{E}_{x}\left[\int_{0}^{\tau}e^{-\delta t}dD_{t}\right].
\end{equation}
Indeed, we can give an upper bound
on the speed of the convergence and show rigorously the following result:

\begin{theorem}\label{LargeT}
Assume $\delta>0$, we have
\begin{equation}
\left|V(x;T)-\sup_{D\in\mathcal{D}}\mathbb{E}_{x}\left[\int_{0}^{\tau}e^{-\delta t}dD_{t}\right]\right|
\leq(x+\lambda\mathbb{E}[Y_{1}]T)e^{-\delta T}+\frac{\lambda\mathbb{E}[Y_{1}]}{\delta}e^{-\delta T}.
\end{equation}
\end{theorem}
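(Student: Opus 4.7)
The plan is to compare the finite-horizon value $V(x;T)$ with the infinite-horizon value $V_{\infty}(x):=\sup_{D\in\mathcal{D}}\mathbb{E}_{x}\bigl[\int_{0}^{\tau} e^{-\delta t}\,dD_{t}\bigr]$ by cutting the time integral at $T$. The trivial direction $V(x;T)\le V_{\infty}(x)$ holds because any admissible $D$ restricted to $[0,T\wedge\tau]$ yields a smaller discounted payout, so the real content is an upper bound on $V_{\infty}(x)-V(x;T)$.

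For this, I would split at the stopping time $T\wedge\tau$ and invoke the strong Markov property of the jump-diffusion $X_t$: for any admissible $D$,
\[
\mathbb{E}_{x}\!\left[\int_{0}^{\tau}e^{-\delta t}\,dD_{t}\right]
=\mathbb{E}_{x}\!\left[\int_{0}^{T\wedge\tau}e^{-\delta t}\,dD_{t}\right]
+\mathbb{E}_{x}\!\left[\mathbf{1}_{\tau>T}\,e^{-\delta T}\,W_{D}(X_{T})\right],
\]
where $W_{D}(X_{T})\le V_{\infty}(X_{T})$ is the conditional expected discounted payout after time $T$. Taking the supremum over $D$, distributing it across the sum, and using that $\mathcal{D}$ is closed under concatenation of a pre-$T$ strategy with a post-$T$ continuation, yields
\[
V_{\infty}(x)\le V(x;T)+e^{-\delta T}\sup_{D\in\mathcal{D}}\mathbb{E}_{x}\!\left[\mathbf{1}_{\tau>T}\,V_{\infty}(X_{T})\right].
\]

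To control the remainder I would use the explicit structure of $V_{\infty}$ recalled in the introduction: $V_{\infty}$ is nondecreasing with $V_{\infty}(x)\le V_{\infty}(b)$ on $[0,b]$, and $V_{\infty}(x)=V_{\infty}(b)+(x-b)$ for $x>b$, so that $V_{\infty}(x)\le V_{\infty}(b)+x$ for every $x\ge 0$. Combined with $V_{\infty}(b)=(\lambda\mathbb{E}[Y_{1}]-\rho)/\delta\le\lambda\mathbb{E}[Y_{1}]/\delta$ from \eqref{V(b)}, the pathwise identity $X_{T}=x-\rho T-D_{T}+J_{T}\le x+J_{T}$ (using $D_{T},\rho T\ge 0$), and $\mathbb{E}_{x}[J_{T}]=\lambda\mathbb{E}[Y_{1}]T$, this produces
\[
V_{\infty}(x)-V(x;T)\le e^{-\delta T}\bigl(V_{\infty}(b)+x+\lambda\mathbb{E}[Y_{1}]T\bigr)\le (x+\lambda\mathbb{E}[Y_{1}]T)e^{-\delta T}+\tfrac{\lambda\mathbb{E}[Y_{1}]}{\delta}e^{-\delta T},
\]
which is precisely the stated inequality.

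The main obstacle is the first step, namely making the dynamic programming decomposition rigorous for general $D\in\mathcal{D}$. One must argue that the supremum over admissible $D$ genuinely splits into a supremum over $[0,T\wedge\tau]$-strategies plus a conditional supremum over continuation strategies after $T\wedge\tau$, and that passing the conditional continuation value to $V_{\infty}(X_{T})$ is legitimate. Given that $\tau$ is a stopping time, $\mathcal{D}$ is closed under concatenation, and $V_{\infty}$ is finite on $[0,\infty)$, this is the standard DPP for singular controls with an absorbing state at zero, and the remaining ingredients (the linear bound on $V_{\infty}$, the expectation of $J_T$) are elementary.
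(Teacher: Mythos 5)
Your proof is correct and reaches the same numerical bound as the paper, but by a genuinely different route. The paper does not invoke a dynamic programming decomposition at all: for an arbitrary $D\in\mathcal D$ it writes the error as $\mathbb E_x\bigl[\int_{T\wedge\tau}^{\tau}e^{-\delta t}dD_t\bigr]\le\mathbb E_x\bigl[\int_T^{\infty}e^{-\delta t}dD_t\bigr]$ and then bounds the tail integral pathwise by comparing with the $\rho=0$, immediate-payout regime, i.e.\ the cumulative dividends after $T$ are dominated by the surplus at $T$ plus subsequent income, and the discounted value of a dominated nondecreasing payment stream is maximised by paying as early as possible, yielding $e^{-\delta T}\mathbb E_x[x+J_T]+\lambda\mathbb E[Y_1]\int_T^{\infty}e^{-\delta t}dt$. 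You instead split at $T\wedge\tau$, pass the continuation value to $V_{\infty}(X_T)$ via the strong Markov property, and then use the barrier structure already established in the paper, namely monotonicity of $V_{\infty}$, linear growth $V_{\infty}(y)\le V_{\infty}(b)+y$, and $V_{\infty}(b)=(\lambda\mathbb E[Y_1]-\rho)/\delta$ from \eqref{V(b)}. The two approaches buy different things: the paper's is more elementary and self-contained, needing only a direct comparison of dividend streams and no appeal to the DPP, but it is stated rather informally and would require the Fubini/domination argument you hint at to be fully rigorous. Your route cleanly isolates the remainder as $e^{-\delta T}\sup_D\mathbb E_x[\mathbf 1_{\tau>T}V_{\infty}(X_T)]$ and finishes by the known structure of $V_{\infty}$; the cost is that you must justify the DPP inequality for singular controls, which you correctly flag as the main technical obstacle, though only the easy $\le$ direction of the DPP is needed so this is mild. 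Both arguments ultimately reduce the constant to $x+\lambda\mathbb E[Y_1]T+\lambda\mathbb E[Y_1]/\delta$, with your $V_{\infty}(b)\le\lambda\mathbb E[Y_1]/\delta$ playing the role of the paper's $\lambda\mathbb E[Y_1]\int_T^\infty e^{-\delta t}dt$ term.
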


Next, let us consider the $\delta=0$ case, i.e.,
\begin{equation}
V(x;T)=\sup_{D\in\mathcal{D}}\mathbb{E}_{x}\left[\int_{0}^{T\wedge\tau}dD_{t}\right].
\end{equation}
Then, we expect that
\begin{equation}
\lim_{T\rightarrow\infty}\frac{V(x;T)}{T}=(\lambda\mathbb{E}[Y_{1}]-\rho)(1-e^{-\alpha x}).
\end{equation}
We shall show this result rigorously later. 
Before we proceed, let us give some heuristic arguments
and gain some intuition behind this result.
Notice that when the discount rate $\delta=0$, the present value of the future dividends
does not decay as the payment time evolves. Moreover, 
$\lambda\mathbb{E}[Y_{1}]>\rho$, so there is an upward drift
and if the company holds the wealth rather than pay the dividends, there will
be less chance that the company is going to get ruined. On the other hand,
we have already seen that the value of the dividends do not decay over time
because $\delta=0$, therefore, the optimal strategy
is not to pay any dividends until the time of the maturity $T$
if by that time the company is not ruined.
As the time horizon $T\rightarrow\infty$, the probability that the company is not ruined
is the ultimate survival probability, that is, $1-e^{-\alpha x}$.
On the other hand, ignoring ruin probability, 
the wealth of the company right before time $T$ is given by \eqref{dual:risk:model}. {\color{blue} By taking expectation, we obtain
\begin{equation}
\mathbb{E}_{x}[X_{t}]=x-\rho t + \mathbb{E}[J_t],~ t<T.
\end{equation}
By monotone convergence theorem, 
\begin{equation}
\mathbb{E}_{x}[X_{T-}]=x-\rho t + \mathbb{E}[J_T]= x+(\lambda\mathbb{E}[Y_{1}]-\rho)T.
\end{equation}
where $X_{T-}$ is the left limit of the process $X_{t}$ at $t\uparrow T$.}
So for large $T$, on average, $\lambda\mathbb{E}[Y_{1}]-\rho$ is the rate of the growth of the company. 
Therefore, we expect to get
\begin{equation}\label{lim_then_sup}
\lim_{T\rightarrow\infty}\frac{V(x;T)}{T}=(\lambda\mathbb{E}[Y_{1}]-\rho)(1-e^{-\alpha x}).
\end{equation}

The rigorous result is as follows.
\begin{theorem}\label{LargeT_delta=0}
As the time horizon $T\rightarrow\infty$, we have
\begin{equation}
\sup_{D\in\mathcal{D}}\mathbb{E}_{x}\left[\int_{0}^{T\wedge\tau}dD_{t}\right]
=(\lambda\mathbb{E}[Y_{1}]-\rho)\left[(1-e^{-\alpha x})T+\frac{e^{-\alpha x}x}{\rho-\lambda\int_{0}^{\infty}e^{-\alpha y}yp(y)dy}+o(1)\right].
\end{equation}
\end{theorem}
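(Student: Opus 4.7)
The plan is to identify the optimal dividend strategy exactly, reduce $V(x;T)$ to an explicit expression involving the undisturbed ruin time $\tau_0:=\inf\{t>0:x-\rho t+J_t\le 0\}$, and then expand that expression as $T\to\infty$.

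Because $\delta=0$ there is no cost to delaying dividends, while paying earlier only depletes wealth and hastens ruin. I therefore expect the optimal strategy $D^\ast$ to pay no dividend until time $T$ and then empty the company at maturity: $D^\ast_t=X_{T^-}\mathbf{1}_{\{t\ge T\}}$. Under $D^\ast$ a direct verification gives ruin time $T\wedge\tau_0$ and total dividend $X_T\mathbf{1}_{\{\tau_0>T\}}$. For the upper bound, for any admissible $D$ the self-financing constraint yields $D_{T\wedge\tau}\le x+J_{T\wedge\tau}-\rho(T\wedge\tau)$, and applying optional sampling to the compound-Poisson martingale $J_t-\lambda\mathbb{E}[Y_1]t$ at the bounded stopping time $T\wedge\tau$ gives
\[
\mathbb{E}_x[D_{T\wedge\tau}]\le x+(\lambda\mathbb{E}[Y_1]-\rho)\mathbb{E}_x[T\wedge\tau]\le x+(\lambda\mathbb{E}[Y_1]-\rho)\mathbb{E}_x[T\wedge\tau_0],
\]
using that any positive dividend strategy only accelerates ruin ($\tau\le\tau_0$). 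For the lower bound, the strong Markov property at $\tau_0$ together with $X_{\tau_0}=0$ on $\{\tau_0<\infty\}$ (the downward drift is continuous, so $X$ crosses zero by hitting it) yields $\mathbb{E}_x[X_T\mathbf{1}_{\{\tau_0>T\}}]=x+(\lambda\mathbb{E}[Y_1]-\rho)\mathbb{E}_x[T\wedge\tau_0]$, matching the upper bound and establishing the closed form.

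It then remains to expand $\mathbb{E}_x[T\wedge\tau_0]=\int_0^T\mathbb{P}_x(\tau_0>t)\,dt$ as $T\to\infty$. Decomposing $\mathbb{P}_x(\tau_0>t)=(1-e^{-\alpha x})+\mathbb{P}_x(t<\tau_0<\infty)$ yields the linear part $(1-e^{-\alpha x})T$ plus a correction $\int_0^T\mathbb{P}_x(t<\tau_0<\infty)\,dt$, which by Fubini converges to $\mathbb{E}_x[\tau_0\mathbf{1}_{\{\tau_0<\infty\}}]$ with $o(1)$ remainder, provided this expectation is finite. To evaluate it, I would differentiate the Laplace transform $\mathbb{E}_x[e^{-\delta\tau_0}]=e^{-\beta(\delta)x}$ in $\delta$ at $\delta=0$. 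Implicit differentiation of the defining equation for $\beta$ gives $\beta'(0)=1/\bigl(\rho-\lambda\int_0^\infty e^{-\alpha y}y\,p(y)\,dy\bigr)$, which is positive and finite by convexity of the Lundberg exponent, so
\[
\mathbb{E}_x[\tau_0\mathbf{1}_{\{\tau_0<\infty\}}]=x\,\beta'(0)\,e^{-\alpha x}=\frac{xe^{-\alpha x}}{\rho-\lambda\int_0^\infty e^{-\alpha y}y\,p(y)\,dy},
\]
which is exactly the stated constant correction once multiplied by $\lambda\mathbb{E}[Y_1]-\rho$.

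The main technical obstacle is the optimality step: one must rigorously justify both that $\tau\le\tau_0$ for every admissible $D$ (delicate because $D$ may have large jumps at $\tau$) and that optional sampling applies cleanly to $J_t-\lambda\mathbb{E}[Y_1]t$ at $T\wedge\tau$ (which is routine once the stopping time is shown to be bounded). After that, the asymptotic expansion reduces to elementary calculus, hinging only on the finiteness of $\mathbb{E}_x[\tau_0\mathbf{1}_{\{\tau_0<\infty\}}]$ (from the Laplace-transform identity) and the dominated convergence theorem.
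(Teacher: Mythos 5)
Your proposal reproduces the paper's proof essentially step for step: you identify the optimal policy as paying nothing until $T\wedge\tau_0$ and then liquidating (the paper's Lemma~\ref{LemmaI}, established by the same submartingale/optional-sampling argument together with $\tau\le\tau_0$), expand $\mathbb{E}_x[T\wedge\tau_0]$ into a linear term plus a correction (Lemma~\ref{LemmaII}), and evaluate $\mathbb{E}_x[\tau_0\mathbf{1}_{\{\tau_0<\infty\}}]$ by implicitly differentiating the Laplace-transform exponent (Lemma~\ref{TauLemma}). One point worth flagging, which you share with the paper's own derivation: Dynkin's formula gives $\mathbb{E}_x[X^0_{T\wedge\tau_0}]=x+(\lambda\mathbb{E}[Y_1]-\rho)\mathbb{E}_x[T\wedge\tau_0]$, so a leading constant $x$ (equivalently an extra $\tfrac{x}{\lambda\mathbb{E}[Y_1]-\rho}$ inside the bracket) should appear in the final expansion but is absent from the theorem as stated.
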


The optimal strategy when $\delta=0$ is to withhold any dividend
until $T\wedge\tau$. But in practice, that is not very realistic. 
Indeed, we will show that the $D^{M,\epsilon}$ strategy that pays continuous
dividend yield after the surplus reaches above the level $M$ is nearly optimal:

\begin{proposition}\label{epsilonOptimalLargeT}
For any $\varepsilon>0$, let $T>0$ be such that
\begin{equation}
\left|\frac{1}{T}\sup_{D\in\mathcal{D}}\mathbb{E}_{x}\left[\int_{0}^{\tau\wedge T}dD_{t}\right]-(\lambda\mathbb{E}[Y_{1}]-\rho)(1-e^{-\alpha x})\right|
\leq\varepsilon.
\end{equation}
Then, there exist some $M', \epsilon'>0$
such that 
for any $M>M'$ and $0<\epsilon<\epsilon'$,  
the $D^{M,\epsilon}$ strategy is an $\varepsilon T$-optimal strategy, i.e., 
\begin{equation}
\left|\frac{1}{T}\mathbb{E}_{x}\left[\int_{0}^{\tau\wedge T}dD^{M,\epsilon}_{t}\right]-(\lambda\mathbb{E}[Y_{1}]-\rho)(1-e^{-\alpha x})\right|
\leq\varepsilon.
\end{equation}
\end{proposition}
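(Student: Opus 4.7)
Let $L:=\lambda\mathbb{E}[Y_1]-\rho$. My plan is to verify the two sides of the inequality separately. The upper bound is immediate since $D^{M,\epsilon}\in\mathcal D$, whence by the hypothesis on $T$,
\[
\frac{1}{T}\mathbb{E}_x\Big[\int_0^{\tau^{M,\epsilon}\wedge T}dD_t^{M,\epsilon}\Big]\le\frac{1}{T}\sup_{D\in\mathcal D}\mathbb{E}_x\Big[\int_0^{\tau\wedge T}dD_t\Big]\le L(1-e^{-\alpha x})+\varepsilon.
\]
The real content is the matching lower bound.

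First I would write the payoff of $D^{M,\epsilon}$ explicitly. Dividends accrue at the constant rate $(1-\epsilon)L$ only on $\{\tau_M<T,\tau_M<\tau_0\}$, from time $\tau_M$ until $(\tau_M+\tau^\epsilon)\wedge T$, so
\[
\mathbb{E}_x\Big[\int_0^{\tau^{M,\epsilon}\wedge T}dD^{M,\epsilon}_t\Big]=(1-\epsilon)L\cdot\mathbb{E}_x\big[\tau^\epsilon\wedge(T-\tau_M)\,\mathbf{1}_{\tau_M<T,\tau_M<\tau_0}\big].
\]
By \eqref{eqn:X^epsilon}, the post-$\tau_M$ process $X^\epsilon$ has strictly positive drift $\epsilon L$ and starts from $X_0^\epsilon=X_{\tau_M}\ge M$, so its infinite-horizon ruin probability satisfies the Lundberg bound $\mathbb{P}(\tau^\epsilon<\infty\mid\mathcal F_{\tau_M})\le e^{-\alpha_\epsilon M}$, where $\alpha_\epsilon>0$ is the positive root of $\alpha[\rho+(1-\epsilon)L]+\lambda\int_0^\infty(e^{-\alpha y}-1)p(y)dy=0$. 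A local expansion at $\alpha=0$ gives $\alpha_\epsilon=O(\epsilon)$ as $\epsilon\to0$, so $M$ will need to be chosen large compared to $1/\epsilon$. By the strong Markov property at $\tau_M$, combined with the independence of the future Poisson jumps, the previous expectation is at least $(1-e^{-\alpha_\epsilon M})\mathbb{E}_x[(T-\tau_M)\mathbf{1}_{\tau_M<T,\tau_M<\tau_0}]$.

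Since $\{\tau_0=\infty\}\subset\{\tau_M<\tau_0\}$ and $\mathbb{P}_x(\tau_0=\infty)=1-e^{-\alpha x}$, I would then bound
\[
\mathbb{E}_x\big[(T-\tau_M)\mathbf{1}_{\tau_M<T,\tau_M<\tau_0}\big]\ge T(1-e^{-\alpha x})-T\,\mathbb{P}_x(\tau_M\ge T,\tau_0=\infty)-\mathbb{E}_x[\tau_M\mathbf{1}_{\tau_0=\infty}].
\]
On $\{\tau_0=\infty\}$ the process has linear growth at rate $L$, so by the strong law $\tau_M/M\to 1/L$ a.s.; quantitative versions, via optional stopping on the martingale $X_t-x-Lt$ at $\tau_M\wedge T\wedge\tau_0$ together with a Chebyshev-type deviation bound on $X_T$, yield $\mathbb{E}_x[\tau_M\mathbf{1}_{\tau_0=\infty}]=O(M)$ and $\mathbb{P}_x(\tau_M\ge T,\tau_0=\infty)\to 0$ whenever $M/T\to 0$.

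To finish, note that Theorem~\ref{LargeT_delta=0} forces the $T$ in the hypothesis to be large when $\varepsilon$ is small; given such $T$, I would choose a moderate $\epsilon$ and then $M$ with $1/\epsilon\ll M\ll T$. Collecting the estimates, the total error between $\frac{1}{T}\mathbb{E}_x[\int_0^{\tau\wedge T}dD_t^{M,\epsilon}]$ and $L(1-e^{-\alpha x})$ is controlled by the sum of the four terms $\epsilon L(1-e^{-\alpha x})$, $Le^{-\alpha_\epsilon M}$, $L\,\mathbb{P}_x(\tau_M\ge T,\tau_0=\infty)$, and $L\mathbb{E}_x[\tau_M\mathbf{1}_{\tau_0=\infty}]/T$, each of which can be made at most $\varepsilon/4$. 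The hard part will be the quantitative hitting-time estimates on the conditioned non-ruin event: the conditioning on $\{\tau_0=\infty\}$ complicates a direct renewal argument, so the martingale/Lundberg machinery is needed to deliver explicit rates.
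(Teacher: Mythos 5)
Your proposal follows the same core strategy as the paper: lower-bound the payoff of $D^{M,\epsilon}$ by conditioning on $\{\tau_M<\tau_0\}$, invoke the exact Lundberg identity $\mathbb{P}(\tau^\epsilon<\infty\mid X_0^\epsilon=M)=e^{-\alpha_\epsilon M}$ for the post-$\tau_M$ process $X^\epsilon$, and then push $M$ large and $\epsilon$ small. The difference is one of precision rather than substance. The paper runs pure iterated limits $\lim_{\epsilon\to0}\lim_{M\to\infty}\liminf_{T\to\infty}$, which asserts the result qualitatively but does not track how $M$ and $\epsilon$ must be chosen relative to the given $T$. You instead make the error decomposition explicit (the four terms $\epsilon L$, $Le^{-\alpha_\epsilon M}$, $L\,\mathbb{P}_x(\tau_M\ge T,\tau_0=\infty)$, $L\,\mathbb{E}_x[\tau_M\mathbf{1}_{\tau_0=\infty}]/T$) and then choose parameters in the regime $1/\epsilon\ll M\ll T$. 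This is arguably closer to what the Proposition actually asks for, since it demonstrates existence of admissible $(M,\epsilon)$ for a \emph{fixed} $T$. The computation that $\alpha_\epsilon=O(\epsilon)$ (so $\alpha_\epsilon M\to\infty$ requires $M\gg1/\epsilon$) is a correct and useful observation that the paper does not make explicit.

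Two caveats. First, the estimate $\mathbb{E}_x[\tau_M\mathbf{1}_{\tau_0=\infty}]=O(M)$ that you flag as the ``hard part'' does indeed need care: optional stopping at $\tau_M\wedge\tau_0$ on the martingale $X^0_t-x-Lt$ gives $L\,\mathbb{E}_x[\tau_M\wedge\tau_0]=\mathbb{E}_x[X^0_{\tau_M\wedge\tau_0}]-x\le\mathbb{E}_x[X^0_{\tau_M}\mathbf{1}_{\tau_M<\tau_0}]$, and bounding the right-hand side by $M+\text{const}$ requires controlling the overshoot above $M$, which generally needs $\mathbb{E}[Y_1^2]<\infty$. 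You should either assume this or note that the overshoot is dominated by the stationary-excess distribution and state the moment condition. Second, both your proof and the paper's really establish the claim only for $T$ large (your choice $M\ll T$ requires it; the paper's $\liminf_{T\to\infty}$ does too); the Proposition's hypothesis on $T$ only forces $T$ to be bounded away from $0$, and neither proof addresses a hypothetical moderate $T$ that happens to satisfy the hypothesis. This is a shared minor gap, not a defect specific to your argument.
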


We can see from Proposition~\ref{epsilonOptimalLargeT}
that $D^{M,\epsilon}$ strategy (defined in Definition~\ref{DDefn})
is nearly optimal in the zero discount rate $\delta=0$
and large time horizon $(T)$ regime, that is, the company does not pay out any dividend
until the surplus jumps above a high threshold $M$, 
and afterwards, a constant dividend yield $(1-\epsilon)(\lambda\mathbb{E}[Y_{1}]-\rho)$
is paid out to the shareholders. Managerial implication is that in the low rate environment (small $\delta$)
with a long-term view (large $T$), 
the company should focus on accumulating wealth and deferring dividend payments until
the company has reached a considerably large scale. This is consistent 
with our conclusion from Proposition~\ref{epsilonOptimalSmallDelta}.



\subsection{Large Discount Rate ($\delta$) Regime}\label{sec:large:delta}

Let us consider the large discount rate ($\delta$) asymptotics in this section.
When the discount rate $\delta\rightarrow\infty$, intuitively, it becomes clear
that if the company waits, the present value of the future dividends
will be virtually zero because of the extreme discount factor. 
Therefore, it is easy to see that the optimal strategy is to give the surplus as the dividends
to the shareholders sooner rather than later. Intuitively, one might guess
that all the surplus should be given to the shareholders as the dividends at time zero and
thus 
\begin{equation}
\sup_{D\in\mathcal{D}}\mathbb{E}_{x}\left[\int_{0}^{T\wedge\tau}e^{-\delta t}dD_{t}\right]\sim x,
\qquad\text{as the discount rate $\delta\rightarrow\infty$}. 
\end{equation}
We will show later that this indeed is true. 
Moreover, we can obtain the second order approximations as the discount rate $\delta\rightarrow\infty$ 
when $Y_{i}$ are exponentially distributed.

Let us consider the special case when $p(y)=\nu e^{-\nu y}$. 
Then by \eqref{V(b)}, for any $x>b$, we have
\begin{equation}
V(x;\delta)=x-b+\frac{\lambda\mathbb{E}[Y_{1}]-\rho}{\delta}
=x-b+\frac{\frac{\lambda}{\nu}-\rho}{\delta}.
\end{equation}
The optimal barrier $b$ is given by \eqref{b_exponential}.
Then from \eqref{r&s}, it is easy to see that $r\sim-\frac{\delta}{\rho}$ as $\delta\rightarrow\infty$
and $s\sim\nu$ as $\delta\rightarrow\infty$.
Furthermore, we can compute that
\begin{align}
\rho r+\delta
&=\frac{-(\lambda+\delta-\nu\rho)}{2}-\frac{\sqrt{(\lambda+\delta-\nu\rho)^{2}+4\rho\nu\delta}}{2}+\delta
\\
&=\frac{\delta-\lambda+\nu\rho-\sqrt{(\delta-\lambda+\nu\rho)^{2}+4\delta\lambda}}{2}
\nonumber
\\
&=\frac{-2\delta\lambda}{\delta-\lambda+\nu\rho+\sqrt{(\delta-\lambda+\nu\rho)^{2}+4\delta\lambda}}
\sim-\lambda,
\nonumber
\end{align}
as the discount rate $\delta\rightarrow\infty$. Therefore, plugging into \eqref{b_exponential}, we get
\begin{equation}
b\sim\frac{\rho\log(\frac{\lambda}{\nu\rho})}{\delta},
\qquad\text{as the discount rate $\delta\rightarrow\infty$}.
\end{equation}
Hence, we conclude that for any $x>0$, 
\begin{equation}
V(x;\delta)=x-b+\frac{\frac{\lambda}{\nu}-\rho}{\delta}
\sim x+\frac{1}{\delta}\left(\frac{\lambda}{\nu}-\rho-\rho\log\left(\frac{\lambda}{\nu\rho}\right)\right),
\end{equation}
as the discount rate $\delta\rightarrow\infty$.

We can indeed prove the first order approximation for generally distributed $Y_{i}$ rigorously 
and have the following result:

\begin{theorem}\label{LargeDeltaThm}
\begin{equation}
\sup_{D\in\mathcal{D}}\mathbb{E}_{x}\left[\int_{0}^{\tau}e^{-\delta t}dD_{t}\right]
=x(1+o(1)),\qquad\text{as the discount rate $\delta\rightarrow\infty$}.
\end{equation}
\end{theorem}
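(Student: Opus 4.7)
The idea is to exploit the barrier structure of the optimizer established in Avanzi et al.~\cite{Avanzi}: $V(x;\delta)=x-b+V(b;\delta)$ for $x\ge b$, together with the explicit identity $V(b;\delta)=(\lambda\mathbb{E}[Y_{1}]-\rho)/\delta$ from \eqref{V(b)}. The theorem then reduces to showing that the optimal barrier $b=b(\delta)$ satisfies $b(\delta)\to 0$ as $\delta\to\infty$, which in turn controls the residual $V(x;\delta)-x$.

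The key step is a slope-at-least-one lower bound on $V$. For any $0<y<x$, paying the lump sum $x-y$ as a dividend at time $0$ and then following an optimal strategy from wealth $y$ produces a feasible strategy whose expected discounted dividends equal $(x-y)+V(y;\delta)$; hence $V(x;\delta)\ge(x-y)+V(y;\delta)$. Setting $x=b$ and letting $y\downarrow 0$, and using the boundary condition $V(0;\delta)=0$, we obtain
\begin{equation*}
b\le V(b;\delta)=\frac{\lambda\mathbb{E}[Y_{1}]-\rho}{\delta},
\end{equation*}
so $b(\delta)\to 0$ as $\delta\to\infty$.

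Finally, fix $x>0$ and take $\delta$ large enough that $b(\delta)<x$. Then $V(x;\delta)=x-b(\delta)+V(b;\delta)$, so
\begin{equation*}
0\le V(x;\delta)-x=V(b;\delta)-b(\delta)\le\frac{\lambda\mathbb{E}[Y_{1}]-\rho}{\delta},
\end{equation*}
which tends to $0$ as $\delta\to\infty$. In particular $V(x;\delta)/x\to 1$, establishing $V(x;\delta)\sim x$. There is no real obstacle once the barrier structure and the identity \eqref{V(b)} are invoked; the proof is essentially algebraic, with the only substantive input being the elementary shift inequality, which needs no regularity of $V$ beyond its definition as a supremum.
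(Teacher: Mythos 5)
Your proof is correct and takes essentially the same route as the paper: both arguments squeeze $V(x;\delta)$ between $x$ (pay everything at time $0$) and $x+(\lambda\mathbb{E}[Y_1]-\rho)/\delta$ using the barrier structure and the $C^1$-fit identity \eqref{V(b)}. The only cosmetic difference is in the upper bound: the paper applies the estimate $V(x;\delta)\le x-b+V(b;\delta)$ for all $x$ (relying on $V'\ge 1$ below the barrier), whereas you first show $b(\delta)\to 0$ via $b\le V(b;\delta)$ and then use the exact identity $V(x;\delta)=x-b+V(b;\delta)$ once $x>b(\delta)$; both deliver the same conclusion.
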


When the discount rate $\delta$ goes to infinity, intuitively, it is clear
that the company should pay all the surplus as dividends to the shareholders
immediately because the cost of carry goes to infinity. 
Indeed, we can see from Theorem~\ref{LargeDeltaThm}
that a nearly optimal strategy is to pay out the amount of the initial wealth $x$
at time zero. Therefore in the high discount rate (e.g., interest rate) environment, 
to maximize the shareholders value, the company should issue the dividend payment
immediately instead of focusing on the growth.


\subsection{Small Time Horizon ($T$) Regime}\label{sec:small:T}

Let us consider the small time horizon ($T$) asymptotics in this section.
When the time horizon $T\rightarrow 0$, there is little time for the company
to accumulate new wealth and thus all what the company
can pay to the shareholders is roughly $x$. 
Within the small time interval $[0,T]$, for sufficiently small, 
the ruin probability is zero because ruin will not occur until after
the time $\frac{x}{\rho}$. So on this short time interval,
there is no ruin risk. For small time horizon $T$, it is easy
to compute that $\mathbb{E}_{x}[X_{T}]=x+(\lambda\mathbb{E}[Y_{1}]-\rho)T$. 
For any dividend, the difference between paying the dividend at time $0$
and at the time $T$ is only a discount factor at most $e^{-\delta T}$. 
Since there is no ruin risk, the initial wealth $x$ should be given upfront
as the dividends to shareholders. $(\lambda\mathbb{E}[Y_{1}]-\rho)T$ is
of order $O(T)$, and thus the impact of the discount factor
on this amount of new wealth is of order $O(T^{2})$ which is negligible. 
Therefore, we have the following result:

\begin{theorem}\label{SmallTThm}
Assume that $\lambda\mathbb{E}[Y_{1}]>\rho$. As the time horizon $T\rightarrow 0$, we have
\begin{equation}
\sup_{D\in\mathcal{D}}\mathbb{E}_{x}\left[\int_{0}^{\tau\wedge T}e^{-\delta t}dD_{t}\right]
=x+(\lambda\mathbb{E}[Y_{1}]-\rho)T+O(T^{2}).
\end{equation}
\end{theorem}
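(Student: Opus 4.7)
The plan is to prove matching upper and lower bounds for $V(x;T)$ that agree up to $O(T^{2})$, using the integral dynamics of the surplus together with optional stopping for the compensated compound Poisson process $J_{t}-\lambda\mathbb{E}[Y_{1}]t$.

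\textbf{Upper bound.} For any admissible $D\in\mathcal{D}$, since $e^{-\delta t}\leq 1$ and $dD_{t}\geq 0$,
\[
\int_{0}^{\tau\wedge T}e^{-\delta t}dD_{t}\leq D_{\tau\wedge T}.
\]
Evaluating the wealth dynamics at $t=\tau\wedge T$ gives $D_{\tau\wedge T}=x-X_{\tau\wedge T}+J_{\tau\wedge T}-\rho(\tau\wedge T)$. Since $J_{t}-\lambda\mathbb{E}[Y_{1}]t$ is a martingale and $\tau\wedge T\leq T$ is a bounded stopping time, the optional stopping theorem yields $\mathbb{E}[J_{\tau\wedge T}]=\lambda\mathbb{E}[Y_{1}]\mathbb{E}[\tau\wedge T]$. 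Under the standard admissibility interpretation that lump sums are bounded by the current surplus, so that $X_{t}\geq 0$ for $t\leq\tau$, and using $\lambda\mathbb{E}[Y_{1}]>\rho$ together with $\mathbb{E}[\tau\wedge T]\leq T$, this produces
\[
V(x;T)\leq x+(\lambda\mathbb{E}[Y_{1}]-\rho)\,\mathbb{E}[\tau\wedge T]\leq x+(\lambda\mathbb{E}[Y_{1}]-\rho)T,
\]
so the error $V(x;T)-x-(\lambda\mathbb{E}[Y_{1}]-\rho)T$ is non-positive and in particular at most $O(T^{2})$.

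\textbf{Lower bound.} I will exhibit an explicit near-optimal admissible strategy. For $T<x/\rho$, set $\epsilon_{T}:=\rho T+T^{2}$ and define $D^{*}$ by paying the lump sum $x-\epsilon_{T}$ at $t=0$ and paying each jump of size $Y_{i}$ immediately at the corresponding jump time $T_{i}\leq T$. Between jumps the surplus decreases linearly at rate $\rho$ from $\epsilon_{T}$, and each upward jump is offset by an identical dividend, so $X_{t}\geq\epsilon_{T}-\rho T=T^{2}>0$ throughout $[0,T]$ and ruin does not occur before $T$. By Campbell's formula for the compound Poisson integral,
\[
\mathbb{E}_{x}\!\left[\int_{0}^{T}e^{-\delta t}dD^{*}_{t}\right]
=(x-\rho T-T^{2})+\lambda\mathbb{E}[Y_{1}]\int_{0}^{T}e^{-\delta t}dt
=x+(\lambda\mathbb{E}[Y_{1}]-\rho)T+O(T^{2}),
\]
which gives $V(x;T)\geq x+(\lambda\mathbb{E}[Y_{1}]-\rho)T-CT^{2}$ for some constant $C>0$. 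Combining with the upper bound proves the claim.

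\textbf{Main obstacle.} The only delicate point is making the inequality $X_{\tau\wedge T}\geq 0$ rigorous in the upper bound; this relies on the standard interpretation that dividend lump sums cannot exceed the current surplus, without which the supremum would be trivially infinite. Once this is granted, everything reduces to optional stopping on the compensated Poisson martingale and the elementary expansion $\lambda\mathbb{E}[Y_{1}]\int_{0}^{T}e^{-\delta t}dt=\lambda\mathbb{E}[Y_{1}]T+O(T^{2})$ that absorbs the contribution of the discount factor into the $O(T^{2})$ remainder.
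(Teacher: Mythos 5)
Your proof is correct, and it differs from the paper's argument in both halves, so a comparison is worth making. For the upper bound, the paper also drops the discount via $e^{-\delta t}\le 1$, but then appeals (implicitly, via the mechanism of Lemma~\ref{LemmaI}) to the fact that in the undiscounted problem it is optimal to hold everything and pay the full surplus at the terminal time, so that for $T<x/\rho$ the undiscounted value is $\mathbb{E}_{x}[X^{0}_{T}]=x+(\lambda\mathbb{E}[Y_{1}]-\rho)T$. You instead integrate the wealth dynamics at $\tau\wedge T$ and apply optional stopping to the compensated martingale $J_{t}-\lambda\mathbb{E}[Y_{1}]t$, using $X_{\tau\wedge T}\ge 0$; this is self-contained but rests on the same admissibility convention (dividends cannot overdraw the surplus) that you rightly flag, and that the paper also uses implicitly through $V(0)=0$. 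For the lower bound, the paper pays $x-\epsilon$ at time $0$, withholds over $(0,T)$, pays the remaining surplus at $T$, and takes $\epsilon=2\rho T$; you pay $x-\epsilon_{T}$ upfront with $\epsilon_{T}=\rho T+T^{2}$ and then pay each jump immediately, evaluated by Campbell's formula. Both near-optimal strategies lose only $O(T^{2})$, and your choice of $\epsilon_{T}$ plays exactly the role of the paper's $\epsilon=2\rho T$. In short, the two proofs are genuine variants of the same idea: yours is more elementary and avoids any reference to the $\delta=0$ optimality of terminal payment, while the paper's lower-bound strategy is marginally simpler because it reduces to a single expected terminal surplus.
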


When the time horizon $T$ goes to zero, there is little time to accumulate new wealth
and what the company can pay to the shareholders is approximately the initial
wealth $x$ of the company. Indeed, we can see from Theorem~\ref{SmallTThm}
that a nearly optimal strategy is to pay out the amount of the initial wealth $x$ at time zero.
Therefore with a short-term view (i.e., small time horizon $T$), 
to maximize the shareholders value, the company should issue the dividend payment
immediately instead of focusing on the growth.


\section{Proofs of the Main Results}\label{sec:proofs}

\subsection{Large Innovation Rate ($\lambda$) Regime}

\begin{proof}[Proof of Theorem \ref{LargeLambdaThm} and Proposition \ref{epsilonOptimalLargeLambda}]
(i) First, let us prove the upper bound. 
Notice that the optimal strategy is the barrier strategy with
\begin{equation}
V(x;\lambda)=x-b+V(b;\lambda),\qquad\text{for $x>b$},
\end{equation}
and $V'(x;\lambda)>1$ for $x<b$ and $V(0;\lambda)=0$. Therefore, it is easy to
see that for $x<b$, we have $V(x;\lambda)\leq x-b+V(b;\lambda)$. 
On the other hand by \eqref{V(b)}, $V(b;\lambda)=\frac{\lambda\mathbb{E}[Y_{1}]-\rho}{\delta}$.
Therefore, for any $x$,
\begin{equation}
V(x;\lambda)\leq x-b+\frac{\lambda\mathbb{E}[Y_{1}]-\rho}{\delta}
\leq x+\frac{\lambda\mathbb{E}[Y_{1}]-\rho}{\delta}.
\end{equation}
This gives us the upper bound.

Next, let us prove the lower bound. For any $M>0$ and $\epsilon>0$, let us recall that the definition of the dividend strategy $D^{M,\epsilon}$ in Definition \ref{DDefn}:
no dividend is paid out until the first time that the process
jumps above $M$ and then it pays dividend with continuous rate $(1-\epsilon)(\lambda\mathbb{E}[Y_1]-\rho)$, and also recall the definitions of $\tau_{M}$, $X_{t}^{\epsilon}$, $\tau^{\epsilon}$, $\tau^{M,\epsilon}$, and $\tau_{0}$.
Then, 
\begin{align}
\sup_{D\in\mathcal{D}}\mathbb{E}_{x}\left[\int_{0}^{\tau}e^{-\delta t}dD_{t}\right]
&\geq\mathbb{E}_{x}\left[\int_{0}^{\tau^{M,\epsilon}}e^{-\delta t}dD_{t}^{M,\epsilon}\right]
\\
&\geq\mathbb{E}_{x}\left[e^{-\delta\tau_{M}}1_{\tau_{M}<\tau_{0}}\right]
(1-\epsilon)(\lambda\mathbb{E}[Y_{1}]-\rho)\mathbb{E}\left[\int_{0}^{\tau^{\epsilon}}e^{-\delta t}dt\middle|X_{0}^{\epsilon}=M\right]
\nonumber
\\
&=\mathbb{E}_{x}\left[e^{-\delta\tau_{M}}1_{\tau_{M}<\tau_{0}}\right](1-\epsilon)\frac{(\lambda\mathbb{E}[Y_{1}]-\rho)}{\delta}
\left(1-\mathbb{E}\left[e^{-\delta\tau^{\epsilon}}\middle|X_{0}^{\epsilon}=M\right]\right)
\nonumber
\\
&=\mathbb{E}_{x}\left[e^{-\delta\tau_{M}}1_{\tau_{M}<\tau_{0}}\right](1-\epsilon)\frac{(\lambda\mathbb{E}[Y_{1}]-\rho)}{\delta}
\left(1-e^{-\beta^{\epsilon,\lambda}M}\right),
\nonumber
\end{align}
where $\beta^{\epsilon,\lambda}$  is the unique positive value
that satisfies the equation
\begin{equation}
\rho\beta^{\epsilon,\lambda}+(1-\epsilon)(\lambda\mathbb{E}[Y_{1}]-\rho)\beta^{\epsilon,\lambda}
+\lambda\int_{0}^{\infty}[e^{-\beta^{\epsilon,\lambda}y}-1]p(y)dy-\delta=0,
\end{equation}
It is easy to check that $\beta^{\epsilon,\lambda}\rightarrow\beta^{\epsilon,\infty}$,
where $\beta^{\epsilon,\infty}$ is the unique positive value that satisfies:
\begin{equation}
(1-\epsilon)\mathbb{E}[Y_{1}]\beta^{\epsilon,\infty}
+\int_{0}^{\infty}[e^{-\beta^{\epsilon,\infty}y}-1]p(y)dy=0.
\end{equation}
Since $\tau_{M}\to0$ when $\lambda\to\infty$,  it follows from bounded convergence theorem that
\begin{equation}
\lim_{\lambda\rightarrow\infty}\mathbb{E}_{x}\left[e^{-\delta\tau_{M}}1_{\tau_{M}<\tau_{0}}\right]
=1.
\end{equation}
Therefore, 
\begin{equation}
\liminf_{\lambda\rightarrow\infty}\frac{1}{\lambda}
\sup_{D\in\mathcal{D}}\mathbb{E}_{x}\left[\int_{0}^{\tau}e^{-\delta t}dD_{t}\right]
\geq(1-\epsilon)\frac{\mathbb{E}[Y_{1}]}{\delta}\left(1-e^{-\beta^{\epsilon,\infty}M}\right).
\end{equation}
Finally, letting $M\rightarrow\infty$ first  and then $\epsilon\rightarrow 0$, we get
\begin{equation}
\liminf_{\lambda\rightarrow\infty}\frac{1}{\lambda}
\sup_{D\in\mathcal{D}}\mathbb{E}_{x}\left[\int_{0}^{\tau}e^{-\delta t}dD_{t}\right]
\geq\frac{\mathbb{E}[Y_{1}]}{\delta}.
\end{equation}

(ii) Assume $\delta>0$ and $T>0$. 
Let us first prove the upper bound.
Let 
\begin{equation}
V(x,t;\lambda):=\sup_{D\in\mathcal{D}}\mathbb{E}_{x}\left[\int_{t}^{T\wedge\tau}e^{-\delta t}dD_{t}\right].
\end{equation}
Then, $V(x,t)$ satisfies the equation:
\begin{equation}
\max\left\{
\frac{\partial V}{\partial t}
-\rho\frac{\partial V}{\partial x}
+\lambda\int_{0}^{\infty}[V(x+y,t;\lambda)-V(x,t;\lambda)]p(y)dy-\delta V,-\frac{\partial V}{\partial x}+1
\right\}=0,
\end{equation}
with terminal condition $V(x,T;\lambda)=x$. Now define the function $U_1(x,t):=x+\frac{\lambda\mathbb{E}[Y_1]-\rho}{\delta}(1-e^{-\delta(T-t)})$ and consider an arbitrage dividend strategy $D\in\mathcal{D}$. Recall that $dX_t=\rho dt +dJ_t-dD_t$. Thus, by change of variable formula for processes of bounded variation (see, e.g., \cite[Theorem~II.31]{Protter}), we  obtain
\begin{align}
U_{1}(x,0)&=\mathbb{E}_{x}[e^{-\delta(T\wedge\tau)}U_{1}(X_{T\wedge\tau},T\wedge\tau)]
+\mathbb{E}_{x}\Bigg[\int_{0}^{T\wedge\tau}e^{-\delta s}\bigg(-\frac{\partial U_{1}}{\partial t}(X_{s},s)+\rho \frac{\partial U_{1}}{\partial x}(X_{s},s)\\
&-\lambda\int_{0}^{\infty}[U_{1}(X_{s}+y,s)-U_{1}(X_{s},s)]p(y)dy
+\delta U_{1}(X_{s},s)\bigg)ds\Bigg]
\nonumber
\\
&\qquad
+\mathbb{E}_{x}\left[\int_{0}^{T\wedge\tau}e^{-\delta s}\frac{\partial U_{1}}{\partial x}(X_{s},s)dD_{s}\right]
\nonumber
\\
&-\mathbb{E}_{x}\left[\sum_{s\leq T\wedge\tau}e^{-\delta s}\left[U_{1}(X_{s+},s)-U_{1}(X_{s},s)+\frac{\partial U_{1}}{\partial x}(X_{s},s)\Delta D_{s}\right]\right].
\nonumber
\end{align}
By direct calculation, the Riemann integral in the above and the term 
\[
\mathbb{E}_{x}[e^{-\delta(T\wedge\tau)}U_{1}(X_{T\wedge\tau},T\wedge\tau)]
\]
 are non-negative. Also for the last term, we have 
\[
U_{1}(X_{s+},s)-U_{1}(X_{s},s)-\frac{\partial U_{1}}{\partial x}(X_{s},s)\Delta D_{s}=0.
\]
Therefore, we can write
\[
U_{1}(x,0)\ge \mathbb{E}_{x}\left[\int_{0}^{T\wedge\tau}e^{-\delta s}dD_{s}\right]
\]
Taking supremum over $D\in\mathcal{D}$ gives us the desired upper bound.

Next, let us prove the lower bound. For any $0<\epsilon<1$, consider the strategy $D^{M,\epsilon}$ from Definition \ref{DDefn}
that pays out dividend
at constant rate $(1-\epsilon)(\lambda\mathbb{E}[Y_{1}]-\rho)$
after the surplus hits above $M$. 
Then, 
\begin{align}
&\sup_{D\in\mathcal{D}}\mathbb{E}_{x}\left[\int_{0}^{T\wedge\tau}e^{-\delta t}dD_{t}\right]
\\
&\geq\mathbb{E}_{x}\left[e^{-\delta\tau_{M}}1_{\tau_{M}<\tau_{0}\wedge T}\right](1-\epsilon)(\lambda\mathbb{E}[Y_{1}]-\rho)
\mathbb{E}\left[\int_{0}^{(T-\tau_{M})\wedge\tau^{\epsilon}}e^{-\delta t}dt\middle|X_{0}^{\epsilon}=M, \tau_{M}<\tau_{0}\wedge T\right]
\nonumber
\\
&=\mathbb{E}_{x}\left[e^{-\delta\tau_{M}}1_{\tau_{M}<\tau_{0}\wedge T}\right](1-\epsilon)(\lambda\mathbb{E}[Y_{1}]-\rho)\nonumber\\
&\qquad\qquad\qquad\qquad\qquad\cdot\frac{1}{\delta}\left(1-\mathbb{E}\left[e^{-\delta((T-\tau_{M})\wedge\tau^{\epsilon})}\middle|X_{0}^{\epsilon}=M, \tau_{M}<\tau_{0}\wedge T\right]\right)
\nonumber,
\end{align}
which implies that
\begin{equation}
\liminf_{\lambda\rightarrow\infty}
\frac{1}{\lambda}\sup_{D\in\mathcal{D}}\mathbb{E}_{x}\left[\int_{0}^{T\wedge\tau}e^{-\delta t}dD_{t}\right]
\geq
(1-\epsilon)\frac{\mathbb{E}[Y_{1}]}{\delta}\left(1-\mathbb{E}[e^{-\delta(T\wedge\tau^{\epsilon})}|X_{0}^{\epsilon}=M]\right).
\end{equation}
Now, by first letting $M\rightarrow\infty$ and then $\epsilon\rightarrow 0$, 
and following the same arguments as in (i), we get the desired lower bound.

(iii) Assume $\delta=0$ and $T>0$.
The upper bound is similar as in (ii) by using the function $U_1(x,t):=x+(\lambda\mathbb{E}[Y_1]-\rho){(T-t)}$. 
We can show that
\begin{equation}
\sup_{D\in\mathcal{D}}\mathbb{E}_{x}\left[\int_{0}^{T\wedge\tau}dD_{t}\right]
\leq x+(\lambda\mathbb{E}[Y_{1}]-\rho)T.
\end{equation}

The lower bound can be obtained similarly as in (ii). We can show that
\begin{equation}
\liminf_{\lambda\rightarrow\infty}
\frac{1}{\lambda}\sup_{D\in\mathcal{D}}\mathbb{E}_{x}\left[\int_{0}^{T\wedge\tau}dD_{t}\right]
\geq
(1-\epsilon)\mathbb{E}[Y_{1}]\mathbb{E}[T\wedge\tau^{\epsilon}|X_{0}^{\epsilon}=M].
\end{equation}
Now, letting $M\rightarrow\infty$, we have $\tau^{\epsilon}\rightarrow\infty$ in probability
and by bounded convergence theorem, we have
\begin{equation}
\liminf_{\lambda\rightarrow\infty}
\frac{1}{\lambda}\sup_{D\in\mathcal{D}}\mathbb{E}_{x}\left[\int_{0}^{T\wedge\tau}e^{-\delta t}dD_{t}\right]
\geq
(1-\epsilon)\frac{\mathbb{E}[Y_{1}]}{\delta}T.
\end{equation}
Since it holds for any $\epsilon>0$, we get the desired lower bound.
\end{proof}

\begin{remark}\label{Alternative}
Indeed, we can also have a nearly optimal discrete dividend strategy
for the large $\lambda$ regime as an alternative
to the continuous dividend strategy $D^{\epsilon}$ defined in the proof
of Theorem \ref{LargeLambdaThm}. Let us consider a particular strategy $D^{\ast}$
which is a barrier strategy with barrier $x>0$, the same as the initial surplus.
Then, 
\begin{align}
\sup_{D\in\mathcal{D}}\mathbb{E}_{x}\left[\int_{0}^{\tau}e^{-\delta t}dD_{t}\right]
&\geq\mathbb{E}_{x}\left[\int_{0}^{\tau}e^{-\delta t}dD_{t}^{\ast}\right]
\\
&\geq\sum_{n=1}^{\infty}\mathbb{E}_{x}\left[e^{-\tau_{x}^{(n)}\delta}
1_{N[\tau_{x}^{(n-1)},\tau_{x}^{(n-1)}+\Delta t]\geq 1}1_{Y_{n}>\rho\Delta t}(Y_{n}-\rho\Delta t)\right]
\nonumber
\end{align}
where $\tau_{x}^{(n)}$ is the $n$-th time that the process jumps above
the threshold $x$ and $\tau_{x}^{(0)}:=0$ and $Y_{n}$ are i.i.d. distributed
as $Y_{1}$ as before. Essentially, we provide a lower bound by counting
the events that there is a jump occur before $\Delta t$ after the process starts
at $x$ and that jump size is greater than $\rho\Delta t$ which guarantees
that the process jumps above the threshold $x$ and pays the dividend.
Therefore, we have
\begin{align}
&\sup_{D\in\mathcal{D}}\mathbb{E}_{x}\left[\int_{0}^{\tau}e^{-\delta t}dD_{t}\right]
\\
&\geq\mathbb{E}\left[1_{Y_{1}>\rho\Delta t}(Y_{1}-\rho\Delta t)\right]
\sum_{n=1}^{\infty}\left(\mathbb{E}_{x}\left[e^{-\tau_{x}^{(1)}\delta}
1_{N[0,\Delta t]\geq 1}1_{\tau_{x}^{(1)}=\inf\{t>0:N_{t}=1\}}\right]\right)^{n}.
\nonumber
\end{align}
It is easy to compute that
\begin{align}
\mathbb{E}_{x}\left[e^{-\tau_{x}^{(1)}\delta}
1_{N[0,\Delta t]\geq 1}1_{\tau_{x}^{(1)}=\inf\{t>0:N_{t}=1\}}\right]
&=\int_{0}^{\Delta t}e^{-\delta s}\lambda e^{-\delta s}ds
\\
&=\frac{\lambda}{\delta+\lambda}\left[1-e^{-(\lambda+\delta)\Delta t}\right],
\nonumber
\end{align}
which implies that
\begin{equation}
\sup_{D\in\mathcal{D}}\mathbb{E}_{x}\left[\int_{0}^{\tau}e^{-\delta t}dD_{t}\right]
\geq\mathbb{E}\left[1_{Y_{1}>\rho\Delta t}(Y_{1}-\rho\Delta t)\right]
\frac{\frac{\lambda}{\delta+\lambda}\left[1-e^{-(\lambda+\delta)\Delta t}\right]}
{1-\frac{\lambda}{\delta+\lambda}\left[1-e^{-(\lambda+\delta)\Delta t}\right]}.
\end{equation}
Therefore, for any $\Delta t>0$, we have
\begin{equation}
\liminf_{\lambda\rightarrow\infty}\frac{1}{\lambda}
\sup_{D\in\mathcal{D}}\mathbb{E}_{x}\left[\int_{0}^{\tau}e^{-\delta t}dD_{t}\right]
\geq\frac{1}{\delta}\mathbb{E}\left[1_{Y_{1}>\rho\Delta t}(Y_{1}-\rho\Delta t)\right].
\end{equation}
By letting $\Delta t\rightarrow 0$, we get the desired lower bound.
\end{remark}


\subsection{Small Running Cost ($\rho$) Regime}
\begin{proof}[Proof of Theorem \ref{SmallRhoThm} and Proposition \ref{epsilonOptimalSmallRho}]
(i) By our assumption, $\rho=0$.
Let $v(x)=x+\frac{\lambda\mathbb{E}[Y_{1}]}{\delta}$. Then, we can compute that
\begin{align}
\max\left\{\lambda\int_{0}^{\infty}[v(x+y)-v(x)]p(y)dy-\delta v(x),1-v'(x)\right\}
=\max\left\{-\delta x,0\right\}=0.
\end{align}
Therefore $v(x)=x+\frac{\lambda\mathbb{E}[Y_{1}]}{\delta}$ is a classical solution of the above problem. Classical verification theorem such as \cite[Theorem 8.4.1]{Fleming-Soner-book-06} shows that $v(x)\ge V(x;\rho)$ for $\rho=0$. Thus, it is sufficient to show that there is a sequence of strategies $D^\epsilon\in\mathcal{D}$ such that 
\[
\limsup_{\epsilon\to0}\mathbb{E}_{x}\left[\int_0^\tau e^{-\delta t}dD^\epsilon_t\right]=v(x).
\]
Recall the strategy $D^\epsilon$ which pays dividend $x-\epsilon$ at the beginning and any surplus above $\epsilon$ thereafter. Then, the ruin never occurs and 
\[
\begin{split}
\mathbb{E}_{x}\left[\int_0^\tau e^{-\delta t}dD^\epsilon_t\right]&=x-\epsilon+\mathbb{E}_{x}\left[\sum_{n=1}^\infty e^{-\delta \tau^{(n)}}Y_n\right]=x-\epsilon+\mathbb{E}[Y_1]\sum_{n=1}^\infty \mathbb{E}_{x}\left[e^{-\delta \tau^{(n)}}\right]\\
&=x-\epsilon+\mathbb{E}[Y_1]\sum_{n=1}^\infty \mathbb{E}_{x}\left[e^{-n\delta \tau^{(1)}}\right]
=x-\epsilon+\frac{\lambda\mathbb{E}[Y_1]}{\delta},
\end{split}
\]
where for $n\ge1$, $\tau^{(n)}$ is the time of the $n$th jump of the process $J$. Thus, \eqref{SmallRhoT=infty} holds.

(ii) For $\delta>0$ and finite $T>0$, 
let $v(t,x)=x+\frac{\lambda\mathbb{E}[Y_{1}]}{\delta}(1-e^{-\delta(T-t)})$. 
Then $v(T,x)=x$ and
\begin{align*}
&\max\left\{\frac{\partial v}{\partial t}(t,x)
+\lambda\int_{0}^{\infty}[v(t,x+y)-v(t,x)]p(y)dy-\delta v(t,x),1-\frac{\partial}{\partial x}v(t,x)\right\}
\\
&=\max\left\{-\delta x,0\right\}=0.
\end{align*}
Therefore $v(t,x)=x+\frac{\lambda\mathbb{E}[Y_{1}]}{\delta}(1-e^{-\delta(T-t)})$ is a classical solution. Therefore, one can repeat the above argument for function $v(t,x)$ to obtain \eqref{SmallRhoTfinitedelta>0}.

(iii) When $\delta=0$, the function $v(t,x)=x+\lambda\mathbb{E}[Y_{1}](T-t)$ should be used to obtain \eqref{SmallRhoTfinitedelta=0}.
\end{proof}


\subsection{Small Discount Rate ($\delta$) Regime}
\begin{proof}[Proof of Theorem \ref{SmallDeltaThm} and Proposition \ref{epsilonOptimalSmallDelta}]
(i) Let us first prove the upper bound.

Let us recall that
\begin{equation}
\min\left\{\delta V(x;\delta)+\rho V'(x;\delta)-\lambda\int_{0}^{\infty}[V(x+y;\delta)-V(x;\delta)]p(y)dy,V'(x;\delta)-1\right\}=0,
\end{equation}
with $V(0)=0$. 

Let us define $U(x):=\delta V(x;\delta)$ so that $U$ satisfies
\begin{equation}
\min\left\{\delta U(x)+\rho U'(x)-\lambda\int_{0}^{\infty}[U(x+y)-U(x)]p(y)dy,U'(x)-\delta\right\}=0,
\end{equation}
with $U(0)=0$. 

We want to show that for $\delta>0$ sufficiently small,
\begin{equation}\label{U:1:defn}
U(x)
\leq U_{1}(x):=(\lambda\mathbb{E}[Y_{1}]-\rho)\left(1-e^{-\alpha(\delta)x}\right)+\delta x,
\end{equation}
where $\alpha(\eta)$ is the largest positive root of
\begin{equation}\label{alpha:eta:defn}
F(\alpha)=\alpha\rho-\lambda\int_{0}^{\infty}[e^{-\alpha y}-1]p(y)dy+\eta=0.
\end{equation}
If $\eta>0$, $\alpha(\eta)$ exists. 
For $\eta<0$ sufficiently small, $F(\alpha)=0$ has at least one positive root.
In addition, $\alpha(\delta)$ is a continuous function.

Let us show that $U\leq U_{1}$. Consider an arbitrary admissible dividend strategy $D_{t}$, 
increasing c\`{a}dl\`{a}g function with $D_{0}=0$. 
By change of variable formula for processes of bounded variation (see, e.g., \cite[Theorem~II.31]{Protter}), we have
\begin{align}
U_{1}(x)&=\mathbb{E}_{x}[e^{-\delta(t\wedge\tau)}U_{1}(X_{t\wedge\tau})]\\
&
+\mathbb{E}_{x}\Bigg[\int_{0}^{t\wedge\tau}e^{-\delta s}\bigg(\rho U_{1}'(X_{s})-\lambda\int_{0}^{\infty}[U_{1}(X_{s}+y)-U_{1}(X_{s})]p(y)dy
+\delta U_{1}(X_{s})\bigg)ds\Bigg]
\nonumber
\\
&\qquad
+\mathbb{E}_{x}\left[\int_{0}^{t\wedge\tau}e^{-\delta s}U'_{1}(X_{s})dD_{s}^{c}\right]
\nonumber
\\
&-\mathbb{E}_{x}\left[\sum_{s\leq t\wedge\tau}e^{-\delta s}\left[U_{1}(X_{s+})-U_{1}(X_{s})-U'_{1}(X_{s})\Delta D_{s}\right]\right],
\nonumber
\end{align}
where $D_{s}^{c}$ is the continuous part of $D_{s}$ (see, e.g., \cite[Pg.~70]{Protter}).
We recall the definition of $U_{1}(x)$ in \eqref{U:1:defn},
where $\alpha(\delta)$ is defined in \eqref{alpha:eta:defn}. Direct calculations shows that
\begin{align}
&\delta U_{1}+\rho U'_{1}-\lambda\int_{0}^{\infty}(U_{1}(x+y)-U_{1}(x))p(y)dy=\delta x^{2}\geq 0,
\\
&U'_{1}(x)\geq\delta.
\end{align}
Thus 
\begin{equation}
U_{1}(x)\geq\mathbb{E}_{x}\left[e^{-\delta(t\wedge\tau)}U_{1}(X_{t\wedge\tau})\right]
+\delta\mathbb{E}_{x}\left[\int_{0}^{t\wedge\tau}e^{-\delta s}dD_{s}\right].
\end{equation}
Notice that here we used $U'_{1}\geq \delta$, $\Delta D_{s}\geq 0$, $X_{s+}\leq X_{s}$ and
\begin{equation}
\sum_{s\leq t\wedge\tau}\left[U_{1}(X_{s+})-U_{1}(X_{s})-U'_{1}(X_{s})\Delta D_{s}\right]e^{-\delta s}\leq
-\delta\sum_{s\leq t\wedge\tau}e^{-\delta s}\Delta D_{s}.
\end{equation}
By sending $t\uparrow+\infty$, we obtain
\begin{equation}
U_{1}(x)\geq\lim_{t\rightarrow\infty}\mathbb{E}_{x}\left[e^{-\delta(t\wedge\tau)}U_{1}(X_{t\wedge\tau})\right]
+\delta\mathbb{E}_{x}\left[\int_{0}^{\infty}e^{-\delta t}dD_{t}\right].
\end{equation}
Note that $\lim_{t\rightarrow\infty}\mathbb{E}_{x}\left[e^{-\delta(t\wedge\tau)}U_{1}(X_{t\wedge\tau})\right]=0$.
Thus, by taking supremum over $D\in\mathcal{D}$, we get
\begin{equation}
U_{1}(x)\geq\delta\sup_{D\in\mathcal{D}}\mathbb{E}_{x}\left[\int_{0}^{\infty}e^{-\delta t}dD_{t}\right]
=\delta V(x;\delta)=U(x).
\end{equation}
From $U(x)\leq U_{1}(x)$, it follows that
\begin{equation}
\limsup_{\delta\rightarrow 0}\delta V(x;\delta)\leq(\lambda\mathbb{E}[Y_{1}]-\rho)(1-e^{-\alpha x}).
\end{equation}

(ii) Now let us consider the lower bound (the Proposition \ref{epsilonOptimalSmallDelta} will also follow). 
For any $M>0$ and $\epsilon>0$, let us recall that the definition of the dividend strategy $D^{M,\epsilon}$ in Definition \ref{DDefn}:
no dividend is paid out until the first time that the process
jumps above $M$ and then it pays dividend with continuous rate $(1-\epsilon)(\lambda\mathbb{E}[Y_1]-\rho)$, and also recall the definitions of $\tau_{M}$, $X_{t}^{\epsilon}$, $\tau^{\epsilon}$, $\tau^{M,\epsilon}$, and $\tau_{0}$.
Therefore, we have
\begin{align}
&\sup_{D\in\mathcal{D}}\mathbb{E}_{x}\left[\int_{0}^{\tau}e^{-\delta t}dD_{t}\right]
\\
&\geq\mathbb{E}_{x}\left[\int_{0}^{\tau^{M,\epsilon}}e^{-\delta t}dD_{t}^{M,\epsilon}\right]
\nonumber
\\
&\geq\mathbb{E}_{x}\left[e^{-\delta\tau_{M}}1_{\tau_{M}<\tau_{0}}\right]
\mathbb{E}\left[\int_{0}^{\tau^{\epsilon}}e^{-\delta t}(1-\epsilon)(\lambda\mathbb{E}[Y_{1}]-\rho)dt\middle|X_{0}^{\epsilon}=M\right]
\nonumber
\\
&=\mathbb{E}_{x}\left[e^{-\delta\tau_{M}}1_{\tau_{M}<\tau_{0}}\right]
(1-\epsilon)(\lambda\mathbb{E}[Y_{1}]-\rho)\frac{1}{\delta}\left(1-\mathbb{E}\left[e^{-\delta\tau^{\epsilon}}\middle|X_{0}^{\epsilon}=M\right]\right),
\nonumber
\end{align}
where the second inequality above uses the simple fact that $X_{\tau_{M}}\geq M$
and a lower bound is given by starting $X_{0}^{\epsilon}=M$ rather than $X_{0}^{\epsilon}=X_{\tau_{M}}$.
Notice that
\begin{equation}
\lim_{\delta\rightarrow 0}\mathbb{E}\left[e^{-\delta\tau^{\epsilon}}\middle|X_{0}^{\epsilon}=M\right]
=\mathbb{P}(\tau^{\epsilon}<\infty|X_{0}^{\epsilon}=M),
\end{equation}
and $u(x):=\mathbb{P}(\tau^{\epsilon}<\infty|X_{0}^{\epsilon}=x)$
satisfies the equation
\begin{equation}
-\rho u'(x)-(1-\epsilon)(\lambda\mathbb{E}[Y_{1}]-\rho)u'(x)
+\lambda\int_{0}^{\infty}[u(x+y)-u(x)]p(y)dy=0,
\end{equation}
with the boundary condition $u(0)=1$, which implies that
\begin{equation}
\mathbb{P}(\tau^{\epsilon}<\infty|X_{0}^{\epsilon}=M)
=e^{-\alpha^{\epsilon}M},
\end{equation}
where $\alpha^{\epsilon}$ is the unique positive value
that satisfies the equation:
\begin{equation}
\rho\alpha^{\epsilon}-(1-\epsilon)(\lambda\mathbb{E}[Y_{1}]-\rho)\alpha^{\epsilon}
+\lambda\int_{0}^{\infty}[e^{-\alpha^{\epsilon}y}-1]p(y)dy=0.
\end{equation}
Moreover, we have
\begin{equation}
\lim_{\delta\rightarrow 0}\mathbb{E}_{x}\left[e^{-\delta\tau_{M}}1_{\tau_{M}<\tau}\right]
=\mathbb{P}_{x}(\tau_{M}<\tau).
\end{equation}
Hence, we have
\begin{equation}
\liminf_{\delta\rightarrow 0}\delta
\sup_{D\in\mathcal{D}}\mathbb{E}_{x}\left[\int_{0}^{\tau}e^{-\delta t}dD_{t}\right]
\geq
\mathbb{P}_{x}(\tau_{M}<\tau)(1-\epsilon)(\lambda\mathbb{E}[Y_{1}]-\rho)
\left(1-e^{-\alpha^{\epsilon}M}\right).
\end{equation}
Since it holds for any $M>x$, we can let $M\rightarrow\infty$ and it follows that
\begin{equation}
\liminf_{\delta\rightarrow 0}\delta
\sup_{D\in\mathcal{D}}\mathbb{E}_{x}\left[\int_{0}^{\tau}e^{-\delta t}dD_{t}\right]
\geq
\mathbb{P}_{x}(\tau=\infty)(1-\epsilon)(\lambda\mathbb{E}[Y_{1}]-\rho).
\end{equation}
Finally, notice that $\mathbb{P}_{x}(\tau=\infty)=1-e^{-\alpha x}$
and it holds for any $\epsilon>0$
and we can let $\epsilon\rightarrow 0$.
That yields the desired lower bound.
\end{proof}


\subsection{Large Time Horizon ($T$) Regime}


\begin{proof}[Proof of Theorem \ref{LargeT}]
For any $D\in\mathcal{D}$, 
\begin{align}
\left|\mathbb{E}_{x}\left[\int_{0}^{T\wedge\tau}e^{-\delta t}dD_{t}\right]
-\mathbb{E}_{x}\left[\int_{0}^{\tau}e^{-\delta t}dD_{t}\right]\right|
&=\mathbb{E}_{x}\left[\int_{T\wedge\tau}^{\tau}e^{-\delta t}dD_{t}\right]
\\
&\leq\mathbb{E}_{x}\left[\int_{T}^{\infty}e^{-\delta t}dD_{t}\right].
\nonumber
\end{align}
The total expected discounted dividends for $\rho\geq 0$
is bounded above by the case when $\rho=0$. In the case when $\rho=0$, 
there is no ruin risk and any surplus should be paid out immediately
to the shareholders. Therefore, for any $D\in\mathcal{D}$,
an upper bound of $\mathbb{E}_{x}\left[\int_{T}^{\infty}e^{-\delta t}dD_{t}\right]$
is given as follows. If no dividend is paid before time $T$ then the expected value
of the surplus at time $T$ is $x+\lambda\mathbb{E}[Y_{1}]T$ when $\rho=0$
and that the surplus is paid out at time $T$, after which, any surplus
is paid out immediately so that for any $D\in\mathcal{D}$,
\begin{equation}
\mathbb{E}_{x}\left[\int_{T}^{\infty}e^{-\delta t}dD_{t}\right]
\leq(x+\lambda\mathbb{E}[Y_{1}]T)e^{-\delta T}
+\lambda\mathbb{E}[Y_{1}]\int_{T}^{\infty}e^{-\delta t}dt,
\end{equation}
which yields the desired result.
\end{proof}

Now let us turn to the proof of Theorem \ref{LargeT_delta=0} which would be directly implied by Lemma \ref{LemmaI} and Lemma \ref{LemmaII}. 
First, let us prove that the optimal strategy is not to pay any dividend until the maturity $T$
if the company is not ruined by then and all the surplus at the maturity
is given to the shareholders as the dividends.

\begin{lemma}\label{LemmaI}
\begin{equation}
\sup_{D\in\mathcal{D}}\mathbb{E}_{x}\left[\int_{0}^{T\wedge\tau}dD_{t}\right]
=\mathbb{E}_{x}\left[X^0_{T\wedge\tau_0}\right],
\end{equation}
where 
$X^{0}_t:=x-\rho t+J_t$ and $\tau_0$ is the ruin time of process $X^0$. 
\end{lemma}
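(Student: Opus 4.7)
The plan is to establish matching upper and lower bounds on the value. The lower bound is achieved by an explicit ``hold--everything'' strategy, and the upper bound follows from an optional-sampling calculation with the compensated Poisson martingale, combined with the obvious domination $\tau\le\tau_{0}$ that holds for every admissible $D$.

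For the lower bound, consider the strategy $D^{\ast}$ that pays no dividend on $[0,T)$ and pays the lump sum $X^{0}_{T}$ at time $T$ whenever $T<\tau_{0}$. Under $D^{\ast}$ the controlled wealth coincides with $X^{0}$ on $[0,T)$, so its ruin time equals $\tau_{0}$ and the cumulative dividends collected before $T\wedge\tau$ are $X^{0}_{T}\,1_{\tau_{0}>T}$. Because $X^{0}$ has only upward jumps and a continuous drift $-\rho$, ruin of $X^{0}$ occurs by continuous descent to zero, so $X^{0}_{\tau_{0}}=0$ almost surely; hence $\mathbb{E}_{x}[X^{0}_{T}\,1_{\tau_{0}>T}]=\mathbb{E}_{x}[X^{0}_{T\wedge\tau_{0}}]$, which yields the inequality $\ge$ in the lemma.

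For the upper bound, fix any $D\in\mathcal{D}$ and write $X_{t}=X^{0}_{t}-D_{t}$ with associated ruin time $\tau$. Since $D\ge 0$, $X_{t}\le X^{0}_{t}$ for all $t$, so $\tau\le\tau_{0}$. Applying Doob's optional sampling to the compensated Poisson martingale $J_{t}-\lambda\mathbb{E}[Y_{1}]t$ at the bounded stopping time $T\wedge\tau$ (permissible since $\mathbb{E}[J_{T}]<\infty$) gives $\mathbb{E}_{x}[J_{T\wedge\tau}]=\lambda\mathbb{E}[Y_{1}]\mathbb{E}_{x}[T\wedge\tau]$, and then taking expectations in $X_{T\wedge\tau}=x-\rho(T\wedge\tau)+J_{T\wedge\tau}-D_{T\wedge\tau}$ yields
\begin{equation*}
\mathbb{E}_{x}[D_{T\wedge\tau}]=x+(\lambda\mathbb{E}[Y_{1}]-\rho)\,\mathbb{E}_{x}[T\wedge\tau]-\mathbb{E}_{x}[X_{T\wedge\tau}].
\end{equation*}
Under the natural admissibility requirement $\Delta D_{t}\le X_{t-}$ one has $X_{T\wedge\tau}\ge 0$, and since $\lambda\mathbb{E}[Y_{1}]-\rho>0$ together with $\tau\le\tau_{0}$ gives $\mathbb{E}_{x}[T\wedge\tau]\le\mathbb{E}_{x}[T\wedge\tau_{0}]$, this produces $\mathbb{E}_{x}[D_{T\wedge\tau}]\le x+(\lambda\mathbb{E}[Y_{1}]-\rho)\mathbb{E}_{x}[T\wedge\tau_{0}]$. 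The same identity applied to $X^{0}$ (with $D\equiv 0$ and $X^{0}_{\tau_{0}}=0$) shows $\mathbb{E}_{x}[X^{0}_{T\wedge\tau_{0}}]=x+(\lambda\mathbb{E}[Y_{1}]-\rho)\mathbb{E}_{x}[T\wedge\tau_{0}]$, closing the upper bound.

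The main subtlety is the admissibility issue: literally reading the definition of $\mathcal{D}$, a strategy could pay a lump sum exceeding the current wealth at ruin, sending $\mathbb{E}_{x}[D_{T\wedge\tau}]$ to $+\infty$ and making the problem ill-posed; one must therefore implicitly impose $\Delta D_{t}\le X_{t-}$, after which $X_{T\wedge\tau}\ge 0$ is automatic and the chain of inequalities above closes cleanly. Everything else is a routine martingale stopping argument combined with the sign information $\lambda\mathbb{E}[Y_{1}]>\rho$ and the monotone coupling $\tau\le\tau_{0}$.
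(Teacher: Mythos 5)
Your proof is correct and follows essentially the same route as the paper's: both rest on the decomposition $\int_0^{T\wedge\tau}dD_t=X^0_{T\wedge\tau}-X_{T\wedge\tau}$, the submartingale property of $X^0$ under $\lambda\mathbb{E}[Y_1]>\rho$ (which you make explicit via the compensated Poisson martingale and Dynkin's formula), and the pathwise coupling $\tau\le\tau_0$. The only real difference is how the admissibility gap in the literal definition of $\mathcal{D}$ is handled: you impose $\Delta D_t\le X_{t-}$ to conclude $X_{T\wedge\tau}\ge 0$, while the paper restricts, without loss of generality for the supremum, to strategies satisfying $X_{T\wedge\tau}=0$; both close the same loophole.
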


\begin{proof}
Let $D$ be an arbitrary admissible dividend strategy such that $X_{T\wedge \tau}=0$, where $\tau$ is the ruin time of process
\[
dX_t=-\rho dt+dJ_t-dD_t,\;\text{\rm and }\; X_0=x.
\]
Next,  define the dividend strategy $\tilde D$ by $\tilde D_t=0$ for $t<T\wedge \tau_0$ and $\tilde D_{T\wedge \tau_0}=X^0_{T\wedge \tau_0}$. For $t<T$, we have $X_t=X^0_t-D_t$. Then, 
\[
\mathbb{E}_{x}\Bigg[\int_0^{T\wedge \tau}dD_t\Bigg]=\mathbb{E}_{x}\left[X^0_{T\wedge\tau}\right]-\mathbb{E}_{x}\left[X_{T\wedge\tau}\right]=\mathbb{E}_{x}\left[X^0_{T\wedge\tau}\right].
\]
On the other hand, since $\lambda\mathbb{E}[Y_1]-\rho>0$, $X^0$ is a submartingale and thus, $\mathbb{E}_{x}\left[X^0_{T\wedge\tau}\right]\le \mathbb{E}_{x}[X^0_{T\wedge\tau_0}]$. Here we used the fact that $\tau_0\ge \tau$. This implies that 
\[
\mathbb{E}_{x}\Bigg[\int_0^{T\wedge \tau}dD_t\Bigg]\le \mathbb{E}_{x}\left[X^0_{T\wedge\tau_0}\right]=\mathbb{E}_{x}\Bigg[\int_0^{T\wedge\tau_0}d\tilde D_t\Bigg].
\]
\end{proof}


The above lemma asserts that when there is no discounting, paying surplus as dividend 
at the terminal time is an optimal strategy. This allows us to focus only on the terminal time dividend payment strategy. When the dividend is paid at terminal time, the expected dividend is equal to $\mathbb{E}_{x}[X^{0}_{T\wedge\tau_{0}}]$, which we will estimate in the next step as $T\rightarrow\infty$.

\begin{lemma}\label{LemmaII}
\begin{equation}
\mathbb{E}_{x}\left[X^0_{T\wedge\tau_{0}}\right]
=(\lambda\mathbb{E}[Y_{1}]-\rho)\left[(1-e^{-\alpha x})T+\frac{e^{-\alpha x}x}{\rho-\lambda\int_{0}^{\infty}e^{-\alpha y}yp(y)dy}+o(1)\right],
\end{equation}
as $T\rightarrow\infty$.
\end{lemma}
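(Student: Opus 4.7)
The plan is to reduce the statement to two ingredients: the first-moment identity coming from the martingale structure of $X^0$, and an explicit formula for $\mathbb{E}_x[\tau_0 1_{\tau_0<\infty}]$. Since $M_t := X^0_t - x - (\lambda\mathbb{E}[Y_1]-\rho)t$ is a martingale (the compensated compound Poisson process with linear drift), and $T\wedge\tau_0 \leq T$ is bounded, optional sampling gives
\begin{equation*}
\mathbb{E}_x\!\left[X^0_{T\wedge\tau_0}\right] = x + (\lambda\mathbb{E}[Y_1]-\rho)\,\mathbb{E}_x[T\wedge\tau_0].
\end{equation*}
Thus the problem reduces to the asymptotics of $\mathbb{E}_x[T\wedge\tau_0]$ as $T\to\infty$.

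Writing $T\wedge\tau_0 = T\,1_{\tau_0>T} + \tau_0\,1_{\tau_0\leq T}$ and splitting $\{\tau_0>T\}$ into $\{\tau_0=\infty\}$ and $\{T<\tau_0<\infty\}$, using $\mathbb{P}_x(\tau_0=\infty)=1-e^{-\alpha x}$, I would obtain
\begin{equation*}
\mathbb{E}_x[T\wedge\tau_0] = T(1-e^{-\alpha x}) + T\,\mathbb{P}_x(T<\tau_0<\infty) + \mathbb{E}_x[\tau_0\,1_{\tau_0\leq T}].
\end{equation*}
Once I know $\mathbb{E}_x[\tau_0\,1_{\tau_0<\infty}]<\infty$, monotone convergence sends the last term to $\mathbb{E}_x[\tau_0\,1_{\tau_0<\infty}]$, while the middle term is bounded by $T\,\mathbb{P}_x(T<\tau_0<\infty)\leq \mathbb{E}_x[\tau_0\,1_{T<\tau_0<\infty}]\to 0$ by dominated convergence. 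This produces the linear-in-$T$ term with coefficient $(\lambda\mathbb{E}[Y_1]-\rho)(1-e^{-\alpha x})$ and reduces the rest to computing $\mathbb{E}_x[\tau_0\,1_{\tau_0<\infty}]$.

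For the latter I would use an Esscher change of measure based on the exponential martingale $e^{-\alpha X^0_t}$: set $\frac{d\tilde{\mathbb{P}}}{d\mathbb{P}}\bigl|_{\mathcal{F}_t} = e^{\alpha x-\alpha X^0_t}$. Under $\tilde{\mathbb{P}}$, the process is still compound-Poisson-plus-drift, but the L\'evy measure is tilted to $\lambda e^{-\alpha y}p(y)\,dy$, so the new drift equals $-\rho+\lambda\int_0^\infty y\,e^{-\alpha y}p(y)\,dy$, which is negative because of the slope properties at the positive root $\alpha$ of the adjustment equation. Hence $\tau_0<\infty$ $\tilde{\mathbb{P}}$-a.s.\ and Wald's identity gives $\tilde{\mathbb{E}}_x[\tau_0]=x/\bigl(\rho-\lambda\int_0^\infty y\,e^{-\alpha y}p(y)\,dy\bigr)$. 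Since $X^0$ has only upward jumps, ruin occurs by continuous descent with no overshoot, so $X^0_{\tau_0}=0$ on $\{\tau_0<\infty\}$; consequently the change-of-measure factor at $\tau_0$ equals exactly $e^{\alpha x}$, yielding
\begin{equation*}
\mathbb{E}_x[\tau_0\,1_{\tau_0<\infty}] = e^{-\alpha x}\,\tilde{\mathbb{E}}_x[\tau_0] = \frac{e^{-\alpha x}\,x}{\rho-\lambda\int_0^\infty y\,e^{-\alpha y}\,p(y)\,dy}.
\end{equation*}
Plugging this back in and collecting the $o(1)$ remainders gives the claim.

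The main obstacle is justifying Wald's identity under $\tilde{\mathbb{P}}$ at the possibly unbounded stopping time $\tau_0$: this requires truncating at $\tau_0\wedge T$, obtaining uniform integrability from the negative $\tilde{\mathbb{P}}$-drift (together with $X^0_{T\wedge\tau_0}\geq 0$), and passing to the limit. A cleaner backup that I would pursue in parallel is to verify directly that the proposed formula $f(x)=e^{-\alpha x}x/\bigl(\rho-\lambda\int y\,e^{-\alpha y}p(y)\,dy\bigr)$ satisfies the Dynkin/Feynman--Kac integro-differential equation $-\rho f'(x)+\lambda\int_0^\infty[f(x+y)-f(x)]p(y)\,dy = -e^{-\alpha x}$ with $f(0)=0$, which together with a uniqueness argument identifies it with $\mathbb{E}_x[\tau_0\,1_{\tau_0<\infty}]$ and bypasses the measure-change subtleties altogether.
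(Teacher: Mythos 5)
Your proposal is correct, and the high-level structure (martingale/Dynkin identity giving $\mathbb{E}_x[X^0_{T\wedge\tau_0}] = x + (\lambda\mathbb{E}[Y_1]-\rho)\mathbb{E}_x[T\wedge\tau_0]$, then splitting $T\wedge\tau_0$ into $\{\tau_0=\infty\}$, $\{T<\tau_0<\infty\}$, $\{\tau_0\le T\}$ and handling the middle term via $T\mathbb{P}_x(T<\tau_0<\infty)\le\mathbb{E}_x[\tau_0 1_{T<\tau_0<\infty}]\to 0$) matches the paper's proof of this lemma step for step. The genuine difference is in how you obtain the key closed form $\mathbb{E}_x[\tau_0 1_{\tau_0<\infty}] = e^{-\alpha x}x\big/\big(\rho-\lambda\int_0^\infty e^{-\alpha y}y\,p(y)\,dy\big)$: the paper (its Lemma 4.4, ``TauLemma'') sets $v(x;\theta)=\mathbb{E}_x[e^{-\theta\tau_0}1_{\tau_0<\infty}]=e^{-\alpha(\theta)x}$ from the integro-differential equation for $v$, then differentiates the relation $\rho\alpha(\theta)+\lambda\int[e^{-\alpha(\theta)y}-1]p(y)\,dy-\theta=0$ at $\theta=0$; you instead perform an Esscher change of measure with density $e^{\alpha x-\alpha X^0_t}$, use the negative drift of the tilted process plus no-overshoot at ruin to apply Wald, and transfer back. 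Both are legitimate and classical; yours is arguably more probabilistic and makes transparent why $\psi'(\alpha)>0$ enters, while the paper's is a one-line computation once the form of $v$ is accepted. You correctly flag that Wald at $\tau_0$ needs a truncation/UI argument under $\tilde{\mathbb{P}}$; the paper's route has its own unstated regularity hypotheses (smoothness of $\alpha(\theta)$, exchanging $\partial_\theta$ with $\mathbb{E}$). Your ``backup'' verification that $f(x)=e^{-\alpha x}x/\big(\rho-\lambda\int e^{-\alpha y}y\,p(y)\,dy\big)$ solves $-\rho f'+\lambda\int[f(x+y)-f(x)]p(y)\,dy=-e^{-\alpha x}$ with $f(0)=0$ is in fact the equation the paper's derivative step produces implicitly, so it is the closest to what they do. One small caveat worth noting: your Dynkin identity produces an extra additive $x$ outside the bracket, i.e.\ $\mathbb{E}_x[X^0_{T\wedge\tau_0}] = x + (\lambda\mathbb{E}[Y_1]-\rho)\big[(1-e^{-\alpha x})T+\cdots+o(1)\big]$; the lemma as printed drops this $x$ (the paper's own proof contains it and then silently omits it in the conclusion), so ``collecting the $o(1)$ remainders gives the claim'' is not quite right --- the $x$ is $O(1)$, not $o(1)$, and should either be kept or noted as an erratum.
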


Before we proceed to the proof of Lemma~\ref{LemmaII}, 
let us first state and prove a technical lemma.

\begin{lemma}\label{TauLemma}
Under the condition that $\rho<\lambda\mathbb{E}[Y_{1}]$, we have
\begin{equation}
\mathbb{E}_{x}[\tau_{0}\cdot 1_{\tau_{0}<\infty}]=\frac{x}{\rho-\lambda\int_{0}^{\infty}e^{-\alpha y}yp(y)dy}.
\end{equation}
\end{lemma}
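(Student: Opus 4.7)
The plan is to characterise $g(x):=\mathbb{E}_x[\tau_{0}\cdot 1_{\tau_{0}<\infty}]$ as the unique bounded classical solution of an integro-differential equation on $[0,\infty)$, and then to read off the explicit solution by a guess-and-verify ansatz. Throughout I use the known survival formula $\mathbb{P}_y(\tau_{0}<\infty)=e^{-\alpha y}$.

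First, I recast $g$ in a Feynman--Kac-friendly form by absorbing the indicator $1_{\tau_{0}<\infty}$ into the integrand. By the strong Markov property, on $\{s<\tau_{0}\}$ one has $\mathbb{P}(\tau_{0}<\infty\mid\mathcal{F}_{s})=\mathbb{P}_{X^{0}_{s}}(\tau_{0}<\infty)=e^{-\alpha X^{0}_{s}}$, so Fubini yields
\[
g(x)=\int_{0}^{\infty}\mathbb{P}_{x}(s<\tau_{0}<\infty)\,ds=\mathbb{E}_{x}\!\left[\int_{0}^{\tau_{0}}e^{-\alpha X^{0}_{s}}\,ds\right].
\]
The integrand is now a bounded function of $X^{0}_{s}$, and the standard Feynman--Kac theorem for the compound-Poisson generator $\mathcal{A}h(x)=-\rho h'(x)+\lambda\int_{0}^{\infty}[h(x+y)-h(x)]p(y)\,dy$ identifies $g$ as the unique bounded $C^{1}$ solution on $[0,\infty)$ of
\[
\mathcal{A}g(x)=-e^{-\alpha x}\quad(x>0),\qquad g(0)=0.
\]

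Next, I solve this equation by the ansatz $g(x)=C\,x\,e^{-\alpha x}$. Using the defining relation $\rho\alpha=\lambda\int_{0}^{\infty}(1-e^{-\alpha y})p(y)\,dy$ of $\alpha$ to cancel all linear-in-$x$ contributions, a short computation collapses $\mathcal{A}\bigl(xe^{-\alpha x}\bigr)$ to $e^{-\alpha x}\bigl[\lambda\int_{0}^{\infty}ye^{-\alpha y}p(y)\,dy-\rho\bigr]$. Matching against $-e^{-\alpha x}$ uniquely determines
\[
C=\frac{1}{\rho-\lambda\int_{0}^{\infty}ye^{-\alpha y}p(y)\,dy},
\]
which is the proportionality constant announced in the lemma. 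The denominator is strictly positive: with $F(\xi):=\rho\xi+\lambda(\hat{p}(\xi)-1)$, the function $F$ is convex, $F(0)=0$, and $F'(0)=\rho-\lambda\mathbb{E}[Y_{1}]<0$, so at the unique positive root $\alpha$ we must have $F'(\alpha)=\rho-\lambda\int_{0}^{\infty}ye^{-\alpha y}p(y)\,dy>0$.

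The principal technical obstacle is making the Feynman--Kac step rigorous, namely identifying the bounded ansatz with the stochastic representation of $g$. I would carry this out by applying Dynkin's formula to the candidate $g(X^{0}_{t\wedge\tau_{0}})$, using the integro-differential equation to rewrite the generator integral as $\mathbb{E}_{x}\bigl[\int_{0}^{t\wedge\tau_{0}}e^{-\alpha X^{0}_{s}}\,ds\bigr]$, and then sending $t\to\infty$ via dominated convergence. Here the martingale $e^{-\alpha X^{0}_{\cdot\wedge\tau_{0}}}$ controls the boundary term at infinity, while the creeping property $X^{0}_{\tau_{0}}=0$ on $\{\tau_{0}<\infty\}$---which holds because $X^{0}$ only crosses zero through the continuous drift $-\rho$---makes the boundary contribution at $\tau_{0}$ vanish. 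Uniqueness of the bounded classical solution follows from the same Dynkin/martingale identity applied to the difference of two putative solutions.
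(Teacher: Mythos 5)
Your argument is correct and arrives at the right quantity, but by a genuinely different route from the paper. The paper computes the Laplace transform $v(x;\theta)=\mathbb{E}_{x}[e^{-\theta\tau_{0}}1_{\tau_{0}<\infty}]=e^{-\alpha(\theta)x}$ and obtains the result by implicit differentiation of the Lundberg-type equation $\rho\alpha(\theta)+\lambda\int_{0}^{\infty}[e^{-\alpha(\theta)y}-1]p(y)\,dy=\theta$ at $\theta=0$, which gives $\mathbb{E}_{x}[\tau_{0}1_{\tau_{0}<\infty}]=\alpha'(0)\,x\,e^{-\alpha x}$ with $\alpha'(0)=\bigl(\rho-\lambda\int_{0}^{\infty}ye^{-\alpha y}p(y)\,dy\bigr)^{-1}$. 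You instead rewrite $\tau_{0}1_{\tau_{0}<\infty}$ as the occupation functional $\int_{0}^{\tau_{0}}e^{-\alpha X^{0}_{s}}\,ds$ via the strong Markov property, derive the Poisson equation $\mathcal{A}g=-e^{-\alpha x}$, and verify the ansatz $Cxe^{-\alpha x}$; your computation of $\mathcal{A}(xe^{-\alpha x})$ and your convexity argument for the positivity of the denominator (which the paper relegates to a separate remark) are both correct, and your verification sketch (Dynkin plus monotone/bounded convergence, using transience on $\{\tau_{0}=\infty\}$ and creeping at $0$) is the standard way to make it rigorous. The two methods are morally dual --- differentiating the transform in $\theta$ versus solving the inhomogeneous generator equation --- but yours avoids justifying the interchange of $\partial_{\theta}$ with the expectation, at the cost of the verification step.

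One point you should make explicit: your derivation yields $g(x)=\frac{x\,e^{-\alpha x}}{\rho-\lambda\int_{0}^{\infty}ye^{-\alpha y}p(y)\,dy}$, with the factor $e^{-\alpha x}$, whereas the lemma as displayed omits it. The paper's own proof in fact produces the same $e^{-\alpha x}$ factor for $\mathbb{E}_{x}[\tau_{0}1_{\tau_{0}<\infty}]$ and uses that version in the proof of Lemma \ref{LemmaII}; the displayed identity without the exponential is really $\mathbb{E}_{x}[\tau_{0}\mid\tau_{0}<\infty]$, i.e.\ the statement contains a typo. Saying only that $C$ ``is the proportionality constant announced in the lemma'' papers over this mismatch; you should state plainly that your final formula carries the extra $e^{-\alpha x}$ and that this is the form actually needed downstream.
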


\begin{proof}[Proof of Lemma \ref{TauLemma}]
When $\mathbb{P}_{x}(\tau_{0}=\infty)>0$, first, we compute
\begin{equation}
v(x):=\mathbb{E}_{x}\left[e^{-\theta\tau_{0}}1_{\tau_{0}<\infty}\right],
\end{equation}
which satisfies the equation:
\begin{equation}
-\rho v'(x)+\lambda\int_{0}^{\infty}[v(x)-v(x)]p(y)dy-\theta v(x)=0,
\end{equation}
with the boundary condition $V(0)=1$. 

It is easy to see that $v(x)=e^{-\alpha(\theta)x}$, where $\alpha(\theta)>0$
is the unique solution to the equation:
\begin{equation}
\rho\alpha(\theta)+\lambda\int_{0}^{\infty}[e^{-\alpha(\theta)y}-1]p(y)dy-\theta=0.
\end{equation}
Differentiating with respect to $\theta$, we get
\begin{equation}\label{thetaZero}
\rho\alpha'(\theta)-\lambda\alpha'(\theta)\int_{0}^{\infty}e^{-\alpha(\theta)y}yp(y)dy-1=0.
\end{equation}
On the other hand,
\begin{equation}
\mathbb{E}_{x}[\tau_{0} 1_{\tau_{0}<\infty}]=\alpha'(0)xe^{-\alpha(0)x}.
\end{equation}
By letting $\theta=0$ in \eqref{thetaZero}, we conclude that
\begin{equation}
\mathbb{E}_{x}[\tau_{0} 1_{\tau_{0}<\infty}]=\frac{e^{-\alpha x}x}{\rho-\lambda\int_{0}^{\infty}e^{-\alpha y}yp(y)dy},
\end{equation}
where $\alpha>0$ is the unique solution to the equation:
\begin{equation}
\rho\alpha+\lambda\int_{0}^{\infty}[e^{-\alpha y}-1]p(y)dy=0.
\end{equation}
Hence,
\begin{equation}
\mathbb{E}_{x}[\tau_{0}|\tau_{0}<\infty]=\frac{x}{\rho-\lambda\int_{0}^{\infty}e^{-\alpha y}yp(y)dy}.
\end{equation}
\end{proof}

\begin{remark}
Note that in Lemma~\ref{TauLemma}, $\mathbb{E}_{x}[\tau_{0}\cdot 1_{\tau_{0}<\infty}]<\infty$
requires that the condition
\begin{equation}\label{subI}
\rho-\lambda\int_{0}^{\infty}e^{-\alpha y}yp(y)dy>0
\end{equation}
is satisfied. This can be easily checked as follows.
Recall that $\alpha$ is the unique positive solution
to the equation
\begin{equation}\label{subII}
\rho\alpha+\lambda\int_{0}^{\infty}[e^{-\alpha y}-1]p(y)dy=0.
\end{equation}
Substituting \eqref{subII} into \eqref{subI}, the condition \eqref{subI}
is equivalent to:
\begin{equation}
\int_{0}^{\infty}[1-e^{-\alpha y}-e^{-\alpha y}\alpha y]p(y)dy>0.
\end{equation}
Let $F(x):=1-e^{-x}-e^{-x}x$ for $x\geq 0$. It is easy
to compute that $F(0)=0$ and $F'(x)=e^{-x}x>0$, which implies
that $F(x)>0$ for any $x>0$ and hence \eqref{subI} holds.
\end{remark}

\begin{remark}
In Lemma~\ref{TauLemma}, for the case when $Y_{i}$ are exponentially distributed, say $p(y)=\nu e^{-\nu y}$ for some $\nu>0$, 
then, we can compute that $\alpha=\frac{\lambda}{\rho}-\nu$ and
\begin{equation}
\mathbb{E}_{x}[\tau_{0}|\tau_{0}<\infty]=\frac{x}{\rho(1-\frac{\rho\nu}{\lambda})}.
\end{equation}
\end{remark}

Now we are ready to prove Lemma~\ref{LemmaII}.

\begin{proof}[Proof of Lemma \ref{LemmaII}]
By change of variable formula for processes of bounded variation (see, e.g., \cite[Theorem~II.31]{Protter}), 
we can compute that
\begin{equation}
\mathbb{E}_{x}\left[X^{0}_{T\wedge\tau_{0}}\right]
=x+(\lambda\mathbb{E}[Y_{1}]-\rho)\mathbb{E}_{x}[T\wedge\tau_{0}].
\end{equation}
Note that 
\begin{equation}
\mathbb{E}_{x}[T\wedge\tau_{0}]=\mathbb{E}_{x}[\tau_{0}\cdot 1_{\tau_{0}<T}]
+T\mathbb{P}_{x}(\tau_{0}>T).
\end{equation}
By Lemma \ref{TauLemma}, we have
\begin{equation}
\mathbb{E}_{x}[\tau_{0}\cdot 1_{\tau_{0}<\infty}]=\frac{x}{\rho-\lambda\int_{0}^{\infty}e^{-\alpha y}yp(y)dy}.
\end{equation}
Moreover, 
\begin{equation}
\lim_{T\rightarrow\infty}\mathbb{P}_{x}(\tau_{0}>T)=\mathbb{P}_{x}(\tau_{0}=\infty)
=1-e^{-\alpha x}.
\end{equation}
Hence, we proved that
\begin{equation}
\lim_{T\rightarrow\infty}\frac{1}{T}\mathbb{E}_{x}[X_{T\wedge\tau_{0}}]
=(\lambda\mathbb{E}[Y_{1}]-\rho)(1-e^{-\alpha x}).
\end{equation}

Finally, notice that
$\mathbb{E}_{x}[\tau_{0}\cdot 1_{\tau_{0}<\infty}]<\infty$. Therefore,
\begin{equation}
\mathbb{E}_{x}[\tau_{0}\cdot 1_{\tau_{0}<\infty}]=\int_{0}^{\infty}\mathbb{P}_{x}(t\leq\tau_{0}<\infty)dt<\infty,
\end{equation}
which implies that
\begin{equation}
\mathbb{P}_{x}(\tau_{0}\geq T)-\mathbb{P}_{x}(\tau_{0}=\infty)=\mathbb{P}_{x}(T\leq\tau_{0}<\infty)=o\left(T^{-1}\right),
\end{equation}
as $T\rightarrow\infty$. Hence, we conclude that
\begin{equation}
\mathbb{E}_{x}[X_{T\wedge\tau_{0}}]
=(\lambda\mathbb{E}[Y_{1}]-\rho)\left[(1-e^{-\alpha x})T+\frac{e^{-\alpha x}x}{\rho-\lambda\int_{0}^{\infty}e^{-\alpha y}yp(y)dy}+o(1)\right],
\end{equation}
as $T\rightarrow\infty$.
\end{proof}


\begin{proof}[Proof of Proposition \ref{epsilonOptimalLargeT}]
For any $M>0$ and $\epsilon>0$, let us recall the definitions of $\tau_{M}$, $X_{t}^{\epsilon}$, $\tau^{\epsilon}$, $\tau^{M,\epsilon}$, and $\tau_{0}$.
Following the similar arguments as in the proof of Theorem \ref{SmallDeltaThm}
and Proposition \ref{epsilonOptimalSmallDelta}, we have
\begin{align}
&\sup_{D\in\mathcal{D}}\mathbb{E}_{x}\left[\int_{0}^{\tau\wedge T}dD_{t}\right]
\geq\mathbb{E}_{x}\left[\int_{0}^{\tau^{M,\epsilon}\wedge T}dD_{t}^{M,\epsilon}\right]
\nonumber
\\
&\geq\mathbb{P}_{x}(\tau_{M}<\tau_{0}\wedge T)
(1-\epsilon)(\lambda\mathbb{E}[Y_{1}]-\rho)\mathbb{E}[\tau^{\epsilon}\wedge(T-\tau_{M})|X_{0}^{\epsilon}=M, \tau_{M}<\tau_{0}\wedge T]
\nonumber
\\
&=\mathbb{P}_{x}(\tau_{M}<\tau_{0}\wedge T)
(1-\epsilon)(\lambda\mathbb{E}[Y_{1}]-\rho)
\nonumber
\\
&\qquad\qquad
\cdot\bigg(\mathbb{E}[(T-\tau_{M})1_{\tau^{\epsilon}=\infty}|X_{0}^{\epsilon}=M, \tau_{M}<\tau_{0}\wedge T]
\nonumber
\\
&\qquad\qquad\qquad\qquad
+\mathbb{E}[\tau^{\epsilon}\wedge(T-\tau_{M})1_{\tau^{\epsilon}<\infty}|X_{0}^{\epsilon}=M, \tau_{M}<\tau_{0}\wedge T]\bigg).
\nonumber
\end{align}
Therefore, we have
\begin{align}
&\liminf_{T\rightarrow\infty}\frac{1}{T}\sup_{D\in\mathcal{D}}\mathbb{E}_{x}\left[\int_{0}^{\tau\wedge T}dD_{t}\right]
\\
&\geq
\mathbb{P}_{x}(\tau_{M}<\tau_{0})(1-\epsilon)(\lambda\mathbb{E}[Y_{1}]-\rho)\mathbb{P}(\tau^{\epsilon}=\infty|X_{0}^{\epsilon}=M).
\nonumber
\end{align}
Hence, we conclude that
\begin{align}
&\lim_{\epsilon\rightarrow 0}\lim_{M\rightarrow\infty}\liminf_{T\rightarrow\infty}\frac{1}{T}\sup_{D\in\mathcal{D}}\mathbb{E}_{x}\left[\int_{0}^{\tau\wedge T}dD_{t}\right]
\\
&\geq
\mathbb{P}_{x}(\tau_{0}=\infty)(\lambda\mathbb{E}[Y_{1}]-\rho)=(1-e^{-\alpha x})(\lambda\mathbb{E}[Y_{1}]-\rho).
\nonumber
\end{align}
\end{proof}



\subsection{Large Discount Rate ($\delta$) Regime}
\begin{proof}[Proof of Theorem \ref{LargeDeltaThm}]
When the initial surplus is paid out completely at time $0$, 
this strategy gives value $x$. Therefore, this gives us a lower bound.

Next, let us prove the upper bound. 
Notice that the optimal strategy is the barrier strategy with
the optimal barrier $b$ and for any $x$
\begin{equation}
V(x;\delta)\leq x-b+\frac{\lambda\mathbb{E}[Y_{1}]-\rho}{\delta}
\leq x+\frac{\lambda\mathbb{E}[Y_{1}]-\rho}{\delta}.
\end{equation}
This gives us the upper bound.
\end{proof}

\subsection{Small Time Horizon ($T$) Regime}
\begin{proof}[Proof of Theorem \ref{SmallTThm}]
Let us first prove the upper bound. It is clear that
\begin{equation}
\sup_{D\in\mathcal{D}}\mathbb{E}_{x}\left[\int_{0}^{\tau\wedge T}e^{-\delta t}dD_{t}\right]
\leq\sup_{D\in\mathcal{D}}\mathbb{E}_{x}\left[\int_{0}^{\tau\wedge T}dD_{t}\right]
=\mathbb{E}_{x}[X_{\tau\wedge T}]=\mathbb{E}_{x}[X_{T}],
\end{equation}
for sufficiently small $T>0$, 
since when there is no discount factor, it is never optimal to pay dividend
and if no dividend is paid out, then the ruin time $\tau\geq\frac{x}{\rho}>T$
for $T$ sufficiently small. We can easily compute that
\begin{equation}
\mathbb{E}_{x}[X_{T}]=x+(\lambda\mathbb{E}[Y_{1}]-\rho)T.
\end{equation}
This gives us the upper bound. Now let us turn to the proof of the lower bound. 
Let us consider a dividend strategy $D^{\epsilon}$ such that $x-\epsilon$
is paid out at time $0$ and then the remaining
surplus is $\epsilon$ and no dividend is paid out. 
For any $T$ that is sufficiently small so that $T<\frac{\epsilon}{\rho}$, then, 
ruin will not occur before time $T$, i.e., $\tau>T$. By considering
this particular strategy, we have
\begin{align}
\sup_{D\in\mathcal{D}}\mathbb{E}_{x}\left[\int_{0}^{\tau\wedge T}e^{-\delta t}dD_{t}\right]
&\geq\mathbb{E}_{x}\left[\int_{0}^{\tau\wedge T}e^{-\delta t}dD^{\epsilon}_{t}\right]
\\
&=x-\epsilon+e^{-\delta T}\left[\epsilon+(\lambda\mathbb{E}[Y_{1}]-\rho)T\right]
\nonumber
\\
&=x-\delta T\epsilon+\epsilon O(T^{2})+(\lambda\mathbb{E}[Y_{1}]-\rho)T
+O(T^{2}),
\nonumber
\end{align}
which holds for $T<\frac{\epsilon}{\rho}$. Take $\epsilon=2\rho T$ for example
will give us the desired lower bound.
\end{proof}

\section*{Acknowledgements}
We are grateful to the editor and two anonymous referees for helpful comments and suggestions.
Arash Fahim gratefully acknowledges support from the National Science Foundation via the award
NSF-DMS-1447067. Lingjiong Zhu is grateful to the support from the National Science Foundation via the award
NSF-DMS-1613164.

\bibliographystyle{plainnat}
\bibliography{dual_risk.bib}

\end{document}